\documentclass[conference]{IEEEtran}
\makeatletter
\def\ps@headings{%
\def\@oddhead{\mbox{}\scriptsize\rightmark \hfil \thepage}%
\def\@evenhead{\scriptsize\thepage \hfil \leftmark\mbox{}}%
\def\@oddfoot{}%
\def\@evenfoot{}}
\makeatother
\pagestyle{headings}

\textheight 9.00in \textwidth  7.25in

\usepackage{times}
\usepackage{algorithmic}
\usepackage{algorithm}
\usepackage{amssymb}
\usepackage{color}
\usepackage{graphicx}
\usepackage{subfig}
\usepackage{hyperref}
\usepackage{makeidx}  
\usepackage{latexsym}
\usepackage{amsmath}
\usepackage{amsfonts}
\usepackage{epsfig}
\usepackage{boxedminipage}
\usepackage{url}

\begin{document}
\pagestyle{headings}


\newtheorem{theorem}{Theorem}[section]
\newtheorem{lemma}[theorem]{Lemma}
\newtheorem{claim}[theorem]{Claim}
\newtheorem{fact}[theorem]{Fact}
\newtheorem{corollary}[theorem]{Corollary}
\newtheorem{remark}[theorem]{Remark}
\newtheorem{definition}{Definition}
\newtheorem{procedure}{Procedure}

\renewcommand{\algorithmicrequire}{\textbf{Input:}}
\renewcommand{\algorithmicensure}{\textbf{Output:}}
\newcommand{\sq}{\hbox{\rlap{$\sqcap$}$\sqcup$}}
\newcommand{\qed}{\hspace*{\fill}\sq}

%

\newcommand{\no}{\noindent}
\newcommand{\bea}{\begin{eqnarray}}
\newcommand{\eea}{\end{eqnarray}}
\newcommand{\beq}{\begin{equation}}
\newcommand{\eeq}{\end{equation}}
\newcommand{\beqs}{\begin{equation*}}
\newcommand{\eeqs}{\end{equation*}}
\newcommand{\beas}{\begin{eqnarray*}}
\newcommand{\eeas}{\end{eqnarray*}}
\newcommand{\ep}{\end{proof}}
\newcommand{\bp}{\begin{proof}}
\newcommand{\D}{\displaystyle}

\newenvironment{statement}{\begin{trivlist}
\item{\bf Statement:}}{\it \end{trivlist}}

\newenvironment{internalproof}{\begin{trivlist}
\item{\bf Proof:}}
{\hfill$\Box$\end{trivlist}}

\def\fct#1{{\mathop{\rm #1}}}   
\def\abs{\fct{abs}}             
\def\sign{\fct{sign}}           
\def\re{\fct{Re\,}}             
\def\im{\fct{Im\,}}             
\def\argmax{\mathop{\rm argmax}}
\def\argmin{\mathop{\rm argmin}}

\def\cS{\mathbb{S}}
\def\hG{{\mathbf{p}}}
\def\hg{{\mathbf{p}}}
\def\tg{\tilde{G}}
\def\tG{\tilde{G}}
\def\tC{\tilde{C}}
\def\tP{\tilde{\Pi}}
\def\tp{\tilde{\Pi}}
\def\tS{\tilde{S}}
\def\hC{\widehat{C}}
\def\hc{\widehat{C}}
\def\wc{\widehat{C}}
\def\bc{\widehat{C}}
\def\bC{\widehat{C}}
\def\hl{\frac{1}{2} \log_2 }
\def\PBS0{\frac{\widehat{h}_b P_T}{M} }
\def\PG{P^g  }
\def\PB{P^b  }
\def\PGS{P^{g^*}  }
\def\PBS{P^{b^*}  }
\def\RG{R^g  }
\def\RB{R^b  }
\def\a{\alpha }
\def\DH{\Delta \! H }
\def\-{  \!-\! }
\def\+{  \!+\! }
\def\argmax{\operatornamewithlimits{arg\,max}}
\def\mbf{\mathbf}
\def\hsA{\hspace*{0.3in}}
\def\hsB{\hspace{0.5cm}}
\def\f{\frac{1}{2}}
\def\o{\overline}
\def\n{\nonumber}
\def\b{\bar}
\def\hP{\hat{P}}
\def\mb{\mbox}

\newcommand{\remove}[1]{}
\def\M{{\mathcal M}}
\def\N{{\cal N}}
\def\H{{\cal N}}
\def\R{{\cal R}}
\def\Q{{\cal Q}}
\def\T{{\cal T}}
\def\p{{\mathbf{p}}}
\def\q{{\mathbf{q}}}
\def\pm{{\mathbf{p_{\min}}}}
\def\C{\widetilde C}
\def\D{\widehat D}
\def\paths{{\cal P}}


\sloppy

%

\title{Polynomial Bottleneck Congestion Games with Optimal Price of Anarchy}

\author{Rajgopal Kannan\\
Dept. of Computer Science\\
Louisiana State University\\
Baton Rouge, LA 70803 \\
\and Costas Busch\\
Dept. of Computer Science\\
Louisiana State University\\
Baton Rouge, LA 70803 \\
\and Athanasios V. Vasilakos\\
Dept. of Telecomm. Engg.\\
Univ. of Western Macedonia\\
Greece\\
}

\date{}

%

\maketitle

\begin{abstract}
We study {\em bottleneck congestion games} where the
social cost is determined by the worst congestion
of any resource.
These games directly relate to network routing problems
and also job-shop scheduling problems.
In typical bottleneck congestion games,
the utility costs of the players
are determined by the worst congested resources that they use.
However, the resulting Nash equilibria
are inefficient, since the price of anarchy
is proportional on the number of resources which can be high.
Here we show that we can get smaller price of anarchy
with the bottleneck social cost metric.
We introduce the {\em polynomial bottleneck games}
where the utility costs of the players
are polynomial functions of the congestion of the resources that they use.
In particular, the delay function for any resource $r$ is $C_{r}^\M$,
where $C_r$ is the congestion measured as the number of players that use $r$,
and $\M \geq 1$ is an integer constant that defines the degree of the polynomial.
The utility cost of a player is the sum of the individual delays of the resources
that it uses.
The social cost of the game remains the same, namely, it is the
worst bottleneck resource congestion: $\max_{r} C_r$.
We show that polynomial bottleneck games are very efficient and give
price of anarchy $O(|R|^{1/(\M+1)})$,
where $R$ is the set of resources.
This price of anarchy is tight, since we demonstrate a
game with price of anarchy $\Omega(|R|^{1/(\M+1)})$, for any $\M \geq 1$.
We obtain our tight bounds by using two proof techniques: {\em transformation},
which we use to convert arbitrary games to simpler games,
and {\em expansion}, which we use to bound the price of anarchy in a simpler game.

\end{abstract}



\section{Introduction}
\label{section:intro}
We consider non-cooperative congestion games with $n$ players,
where each player has a {\em pure strategy profile}
from which it selfishly selects a strategy
that minimizes the player's utility cost function
(such games are also known as {\em atomic} or {\em unsplittable-flow} games).
We focus on {\em bottleneck congestion games} where the objective for the social outcome
is to minimize $C$, the maximum congestion on any resource.
Typically, the congestion on a resource is a non-decreasing function
on the number of players that use the resource;
here, we consider the congestion to be simply
the number of players that use the resource.

Bottleneck congestion games have been studied
in the literature \cite{BO07,BM06,BKV08}
in the context of routing games,
where each player's utility cost is
the worst resource congestion on its strategy.
For any resource $r$, we denote by $C_r$
the number of users that use $r$ in their strategies.
In typical bottleneck congestion games,
each player $i$ has utility cost function
$C_i = \max_{r \in S_i} C_r$,
where $S_i$ is the strategy of the player.
The social cost is worst congested resource:
$C = \max_i C_i = \max_r C_r$.

In \cite{BO07} the authors observe that
bottleneck games are important in networks for various practical reasons.
In networks, each resource corresponds to a network link,
each player corresponds to a packet,
and a strategy represents a path for the packet.
In wireless networks, the maximum congested link is related to the
lifetime of the network since the nodes adjacent to high congestion links
transmit large number of packets which results to
higher energy utilization.
High congestion links also result to congestion hot-spots which
may slow-down the network throughput.
Hot spots also increase the vulnerability of the network
to malicious attacks which aim to to increase the congestion of links
in the hope to bring down the network.
Thus, minimizing the maximum congested edge results to hot-spot avoidance
and more load-balanced and secure networks.

In networks, bottleneck games are also important from a theoretical point of view
since the maximum resource congestion is immediately related to the
optimal packet scheduling.
In a seminal result,
Leighton {\it et al.} \cite{LMR94}
showed that there exist packet scheduling algorithms that
can deliver the packets along their chosen paths in time very close to $C+D$,
where $D$ is the maximum chosen path length.
When $C \gg D$,
the congestion becomes the dominant factor in the packet scheduling performance.
Thus, smaller $C$ immediately implies faster packet delivery time.

A natural problem that arises in games
concerns the effect of the players'
selfishness on the welfare of the whole system
measured with the {\em social cost} $C$.
We examine the consequence of the selfish behavior
in pure {\em Nash equilibria} which are stable states of the game in which no
player can unilaterally improve her situation.
We quantify the effect of
selfishness with the {\em price of anarchy}
($PoA$)~\cite{KP99,P01}, which expresses how much larger is the
worst social cost in a Nash equilibrium compared to the social cost
in the optimal coordinated solution.
The price of anarchy provides a measure for estimating
how closely do Nash equilibria of bottleneck congestion games approximate the
optimal $C^*$ of the respective coordinated optimization problem.

Ideally, the price of anarchy should be small.
However, the current literature results have only provided weak bounds for bottleneck games.
In \cite{BO07} it is shown that if the resource congestion delay function is bounded
by some polynomial with degree $k$ (with respect to the packets that use the resource)
then $PoA = O(|R|^k)$,
where $R$ is the set of links (resources) in the graph.
In \cite{BM06} the authors consider bottleneck routing games
for the case $k=1$ and they show that $PoA = O(L + \log |V|)$,
where $L$ is the maximum path length (maximum number of resources)
in the players' strategies and $V$ is the set of nodes in the network.
This bound is asymptotically tight (within poly-log factors)
since it is shown in \cite{BM06}
that there are game instances with $PoA = \Omega(L)$.
Since $L = |R|$,
the price of anarchy has to be large, $PoA = \Omega(|R|)$.

\subsection{Contributions}

The lower bound in \cite{BM06} suggests that in order to obtain
better price of anarchy in bottleneck congestion games
(where the social cost is the bottleneck resource $C$),
we need to consider alternative player utility cost functions.
Towards this goal, we introduce {\em polynomial bottleneck games}
where the player cost functions
are polynomial expressions of the congestions along the resources.
In particular, the player utility cost function for player $i$ is:
$C'_i = \sum_{r \in S_i} C_{r}^{\M},$
for some integer constant $\M \geq 1$.
Note that the new utility cost is a sum of polynomial
terms on the congestion of the resources in the chosen strategy
(instead of the max that we described earlier).
The social cost remains the maximum bottleneck congestion $C$,
the same as in typical congestion games.

The new player utility costs
have significant benefits in improving
both the upper and lower bounds on the price of anarchy.
For the bottleneck social cost $C$
we prove that the price of anarchy of polynomial games is:
\begin{equation*}
PoA = O(|R|^{1 / (\M + 1)}),
\label{eqn:basic}
\end{equation*}
for any constant $\M \geq 1$.
We show that this bound is asymptotically tight by providing an instance
of a polynomial bottleneck game with $PoA = \Omega(|R|^{1 /(M+1)})$,
for any constant $\M \geq 1$.
Our price of anarchy bound is a significant improvement over the
price of anarchy from the typical bottleneck games described above.


Polynomial congestion games are interesting variations of bottleneck games
not only because they provide good price of anarchy but also
because they represent interesting and important real-life problems.
In networks, the overall delay that a packet experiences
is directly related with the link congestions along the path
and hence the polynomial utility cost function
reflects the total delivery delay.
In wireless networks, the polynomial player utilities correspond
to the total energy that a packet consumes while it traverses the network,
and the social cost reflects to the worst energy utilization in 
any node in the network.
Similar benefits from polynomial congestion games
appear in the context of job-shop scheduling,
where computational tasks require resources to execute.
In this context, the social bottleneck cost function $C$ represents
the task load-balancing efficiency of the resources,
and the player utility costs
relate to the makespan of the task schedule.
In all the above problems, the polynomial degree $\M$
is chosen appropriately to model precisely the involved costs
of the resource utilization in each computational environment.


In our analysis, we obtain the price of anarchy upper bound
by using two techniques: {\em transformation} and {\em expansion}.
Consider a game $G$ with a Nash equilibrium $S$ and congestion $C$.
We identify two kinds of players in $S$:
type-A players which use
only one resource in their strategies,
and type-B players which use two or more resources.
In our first technique, transformation, we convert $G$ to a simpler game $\tilde G$,
having a Nash equilibrium $\tilde S$ with congestion $\tilde C$,
such that $\tilde C = O(C)$, and all players in $\tilde S$ with congestion above a threshold $\tau$
are of type-A; that is, we transform type-B players to type-A players.
Having type-A players is easier to bound the price of anarchy.
Then, we use a second technique, expansion,
which is used to give an upper bound on the price of anarchy of game $\tilde G$,
which implies an upper bound on the price of anarchy of the original game $G$.



In \cite{SAGT}, we have derived upper bounds for the price of anarchy of
games with exponential utility cost functions using similar techniques
(transformation and expansion).  While exponential cost games have
a unique substructure which makes the analysis of Price of Anarchy
much simpler, we believe these techniques are general enough to adapt
in a non-trivial manner for a large class of utility cost functions.
For the case of exponential cost games, we obtained logarithmic price of
anarchy upper bounds, which was related to the problem structure.  Here we
obtain tight (optimal) price of anarchy bounds for polynomial bottleneck
games using a non-trivial application of the general transformation and
expansion techniques.

\subsection{Related Work}

Congestion games were introduced and
studied in~\cite{monderer1,rosenthal1}.
In \cite{rosenthal1}, Rosenthal proves that
congestion games have always pure Nash equilibria.
Koutsoupias and Papadimitriou \cite{KP99} introduced
the notion of price of anarchy
in the specific {\em parallel link networks} model
in which they provide the bound $PoA = 3/2$.
Roughgarden and Tardos \cite{roughgarden3}
provided the first result for splittable flows in general networks
in which they showed that $PoA\le 4/3$ for a player
cost which reflects to the sum of congestions of the resources of a path.
Pure equilibria with atomic flow have been studied in
\cite{BM06,CK05,libman1,STZ04}
(our work fits into this category),
and with splittable flow in
\cite{roughgarden1,roughgarden2,roughgarden3,roughgarden5}.
Mixed equilibria with atomic flow have been studied in
\cite{czumaj1,GLMMb04,KMS02,KP99,LMMR04,MS01,P01},
and with splittable flow in
\cite{correa1,FKS02}.

Most of the work in the literature uses a cost metric
related to the sum of congestions of all the resources of the player's path
\cite{CK05,roughgarden2,roughgarden3,roughgarden5,STZ04}.
In terms of our notation,
the player cost functions are polynomials of degree $\M = 1$.
However, the social cost in those games is different than
ours since it is an aggregate function of the player flows and
congestion of all the resources.
On the other hand, the social cost in our case corresponds to the bottleneck congestion
which is a metric that reduces to a single resource.
The vast majority of the work on congestion games has been performed
for parallel link networks,
with only a few exceptions on network topologies
\cite{BM06,CK05,correa1,roughgarden1}.
Our work immediately applies to network topologies.

In \cite{BM06}, the authors consider bottleneck routing games in networks
with player cost $C_i$ and social cost $C$.
They prove that the price of stability is 1
(the price of stability measures the ratio
of the best Nash equilibrium social cost
versus the coordinated optimal solution).
They show that the price of anarchy is bounded by $O(L + \log |V|)$,
where $L$ is the maximum allowed path length, and $V$ is the set of nodes.
They also prove that $\kappa \leq PoA \leq c(\kappa^2 + log^2 |V|)$,
where $\kappa$ is the size of the largest resource-simple cycle in the graph and $c$ is a constant.
That work was extended in \cite{BKV08} to the $C+D$ routing problem.
Bottleneck routing games have also been studied in~\cite{BO07},
where the authors consider the maximum congestion metric in general networks
with splittable and atomic flow (but without considering path lengths).
They prove the existence and non-uniqueness
of equilibria in both the splittable and atomic
flow models.
They show that finding the best Nash equilibrium that
minimizes the social cost is a NP-hard problem.
Further, they show that the price of anarchy may be unbounded for specific
resource congestion functions.
In \cite{Harks09}, the authors prove the existence of strong Nash equilibria
(which concern coalitions of players)
for games with the lexicographic improvement property;
such games include Bottleneck congestion games and our polynomial games.

%


\subsection*{Outline of Paper}
In Section \ref{section:definitions}
we give basic definitions.
In Section \ref{section:BtoA} we
convert games with type-B players to games with type-A players.
In Section \ref{section:anarchy}
we give a bound on the price of anarchy.
We finish with providing a lower bound in Section \ref{section:lower}.

\section{Definitions}
\label{section:definitions}

A {\em congestion game} is a strategic game
$G = (\Pi_G, R, \cS, (d_r)_{r \in R}, (pc_{\pi})_{\pi \in \Pi_G})$
where:
\begin{itemize}
\item
$\Pi_G = \{\pi_1,\ldots, \pi_n\}$ is a non-empty and finite set of players.

\item
$R = \{r_1,\ldots,r_z\}$ is a non-empty and finite set of resources.

\item
$\cS = \cS_{\pi_1} \times \cS_{\pi_2} \times \cdots \times \cS_{\pi_n}$,
where $\cS_{\pi_i}$ is a strategy set for player $\pi_i$,
such that $\cS_{\pi_i} \subseteq powerset(R)$;
namely, each strategy $S_{\pi_i} \in \cS_{\pi_i}$ is pure,
and it is a collection of resources.
A {\em game state} (or {\em pure strategy profile})
is any $S \in \cS$.
We consider {\em finite games} which have finite $\cS$ (finite number of states).

\item
In any game state $S$,
each resource $r \in R$ has a {\em delay cost} denoted $d_r(S)$.

\item
In any game state $S$,
each player $\pi \in \Pi_G$ has a player cost $pc_\pi(S) = \sum_{r \in S_\pi} d_r(S)$.
\end{itemize}

Consider a game $G$ with a state $S = (S_{\pi_1}, \ldots, S_{\pi_n})$.
The {\em (congestion) of a resource $r$}
is defined as $C_r(S) = |\{ \pi_i: r \in S_{\pi_i} \}|$,
which is the number of players that use $r$ in state $S$.
The {\em (bottleneck) congestion of a set of resources $Q \subseteq R$}
is defined as $C_Q(S) = \max_{r \in Q} C_r(S)$,
which is the maximum congestion over all resources in $Q$.
The \emph{(bottleneck) congestion of state $S$} is denoted $C(S) = C_R(S)$,
which is the maximum congestion over all resources in $R$.
The {\em length of state $S$} is defined to be $L(S) = \max_i |S_{\pi_i}|$,
namely, the maximum number of resources used in any player.
When the context is clear, we will drop the dependence on $S$.
We examine polynomial congestion games:
\begin{itemize}
\item
{\em Polynomial games:}
The delay cost function for any resource $r$ is $d_r = C_r^{\M}$, 
for some integer constant $\M \geq 1$.
\end{itemize}

For any state $S$,
we use the standard notation
$S = (S_{\pi_i},S_{-{\pi_i}})$
to emphasize the dependence on player $\pi_i$.
Player $\pi_i$
is \emph{locally optimal} (or {\em stable}) in state $S$ if
$pc_{\pi_i}(S) \leq pc_{\pi_i}((S'_{\pi_i},S_{-{\pi_i}},))$ for
all strategies $S'_{\pi_i} \in \cS_{\pi_i}$.
A greedy move by a player $\pi_i$ is any change of its strategy
from $S'_{\pi_i}$ to $S_{\pi_i}$
which improves the player's cost,
that is, $pc_{\pi_i}((S_{\pi_i},S_{-{\pi_i}})) < pc_{\pi_i}((S'_{\pi_i},S_{-{\pi_i}}))$.
{\em Best response dynamics} are sequences of greedy moves by players.
A state $S$ is in a {\em Nash Equilibrium}
if every player is locally optimal.
Nash Equilibria quantify the notion of
a stable selfish outcome.
In the games that we study there could exist multiple Nash Equilibria.

For any game $G$ and state $S$,
we will consider a \emph{social cost} (or {\em global cost})
which is simply the bottleneck congestion $C(S)$.
A state $S^*$ is called {\em optimal} if it has minimum
attainable social cost: for any other state $S$, $C(S^*) \le C(S)$.
We will denote $C^* = C(S^*)$.
We quantify the quality of the states which are Nash Equilibria with the
\emph{price of anarchy} ($PoA$)
(sometimes referred to as the coordination ratio).
Let $\cal P$ denote the set of distinct
Nash Equilibria.
Then the price of anarchy of game $G$ is:
\begin{equation*}
PoA(G)
=\sup\limits_{ S \in ~{\cal P}} \frac{C(S)}{C^*},
\end{equation*}

%

We continue with some more special definitions that we use in the proofs.
Consider a game $G$ with a socially optimal state $S^* = (S^*_{\pi_1},
\ldots, S^*_{\pi_n})$, and let $S = (S_{\pi_1}, \ldots, S_{\pi_n})$
denote the equilibrium state.  We consider two special kinds of players
with respect to states $S$ and $S^*$:

\begin{itemize}
\item
{\em Type-A players:}
any player $\pi_i$ with $|S_{\pi_i}| = 1$.

\item
{\em Type-B players:}
any player $\pi_i$ with $|S_{\pi_i}| \geq 2$.
\end{itemize}

For any resource $r \in R$, we will let $\Pi_r$ and $\Pi^*_r$ denote
the set of players with $r$ in their equilibrium and socially optimal
strategies respectively, i.e $\Pi_r = \{ \pi_i \in \Pi_G | r \in S_{\pi_i} \}$
and $\Pi^*_r = \{ \pi_i \in \Pi_G | r \in S^*_{\pi_i} \}$.

Let $G = (\Pi_G, R, \cS, d, (pc_{\pi})_{\pi \in \Pi_G})$ and $\tG =
(\Pi_{\tG}, \tilde{R}, \tilde {\cS}, \tilde{d}, (\tilde{pc}_{\pi})_{\pi
\in \Pi_{\tG}})$ be two games.  We say that {\em $G$ dominates $\tG$}
if the following conditions hold between them for the highest cost
Nash equilibrium and optimal states and : $|\tilde{R}| \leq |R|$,
$d = \tilde{d}$, $\tilde{C} = C$, $C^* \leq \tilde{C}^* \leq  \beta C^* $,
where $\beta >1$ is a constant and $C,C^*$ and $\tilde{C}, \tilde{C}^*$ represent the bottleneck
congestions in the  highest cost Nash equilibrium and optimal states of
$G$ and $\tG$, respectively.

\begin{corollary}
$PoA(G) \leq \beta \cdot PoA(\tG)$ for an arbitrary game $G$ and dominated
game $\tG$.
\label{dominatingcorollary}
\end{corollary}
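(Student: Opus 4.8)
The plan is to expand both prices of anarchy using the defining supremum formula and then invoke the four domination conditions directly. The key observation is that since the optimal social cost $C^*$ is a fixed quantity for a given game, the supremum $\sup_{S \in \mathcal{P}} C(S)/C^*$ is attained precisely at the Nash equilibrium of highest social cost. Hence $PoA(G) = C/C^*$, where $C$ is the bottleneck congestion of the highest-cost Nash equilibrium of $G$ and $C^*$ is its optimal congestion; symmetrically, $PoA(\tG) = \tilde{C}/\tilde{C}^*$. These are exactly the four quantities that the domination relation constrains, so the identification between the ratios and the hypotheses is immediate.

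Next I would substitute the conditions $\tilde{C} = C$ and $\tilde{C}^* \le \beta C^*$. Writing
\[
PoA(\tG) = \frac{\tilde{C}}{\tilde{C}^*} = \frac{C}{\tilde{C}^*} \ge \frac{C}{\beta C^*} = \frac{1}{\beta}\, PoA(G),
\]
where the inequality uses $\tilde{C}^* \le \beta C^*$ together with positivity of all quantities to preserve the direction when passing to reciprocals. Multiplying through by $\beta$ yields the claimed bound $PoA(G) \le \beta \cdot PoA(\tG)$.

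I do not expect a genuine obstacle here: the statement is essentially a bookkeeping consequence of how the domination relation was defined, and the lower condition $C^* \le \tilde{C}^*$ is not even needed for this direction (it will matter elsewhere, for controlling the reverse comparison). The only point requiring mild care is confirming that the supremum in each $PoA$ is realized at the highest-cost equilibrium, so that the symbols $C, C^*, \tilde{C}, \tilde{C}^*$ appearing in the domination hypotheses coincide with the numerators and denominators of the two ratios. Once that identification is made explicit, the argument reduces to a single line of algebra.
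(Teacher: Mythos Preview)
Your proposal is correct and is exactly the argument the paper has in mind: the corollary is stated without proof in the paper, since it follows immediately from the definition of domination by the one-line computation you give. Your observation that the supremum in the definition of $PoA$ is realized at the highest-cost Nash equilibrium (the game being finite) is precisely what makes the symbols $C,C^*,\tilde C,\tilde C^*$ in the domination hypotheses coincide with the ratios defining the two prices of anarchy.
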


In the next section, we will describe how an arbitrary game $G$
in Nash equilibrium state $S$ can be transformed into a dominated
game $\tG$ containing type $A$ players of arbitrary cost and type
$B$ players restricted to costs below a given threshold.



\section{Type-$B$ to Type-$A$ Game Transformation}
\label{section:BtoA}

We first state our main results in this section. 

\begin{theorem}
{\it Every game $G$ with highest-cost Nash equilibrium state $S$ can be
transformed into a game $\tilde{G}$ with Nash equilibrium state $\tS$
in which all resources $r$ with congestion $C_r > \psi = \max(2\M,3C^*)$
are occupied exclusively by type-$A$ players. 
}
\label{BtoAlemma}
\end{theorem}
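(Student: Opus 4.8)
The plan is to build $\tG$ by repeatedly \emph{localizing} the type-$B$ players that sit on over-congested resources, and then to invoke Corollary~\ref{dominatingcorollary}. Fix the highest-cost equilibrium $S$ of $G$ and a socially optimal state $S^*$, with congestions $C$ and $C^*$. Call a resource \emph{heavy} if $C_r > \psi = \max(2\M, 3C^*)$, and call a type-$B$ player \emph{bad} if its equilibrium strategy meets a heavy resource. While a bad player $\pi_i$ exists I would pick a heavy resource $r \in S_{\pi_i}$ and perform the local move: replace $S_{\pi_i}$ by the singleton $\{r\}$ (so $\pi_i$ becomes type-$A$ on $r$), delete $\pi_i$ from its other resources, and correspondingly prune its strategy set in $\tG$. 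Because $\pi_i$ stays on $r$ and is only removed elsewhere, \emph{no congestion ever increases}; in particular $r$ keeps congestion $C_r$, so the bottleneck is preserved and the candidate state $\tS$ retains congestion $C$.

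First I would argue termination. Each application strictly decreases the number of incidences between type-$B$ players and heavy resources (the move erases at least the incidence $(\pi_i,r)$, reclassifies $\pi_i$ as type-$A$, and can only lower other congestions, possibly demoting further resources below $\psi$), and this count is a nonnegative integer. Hence after finitely many moves no bad players remain and every heavy resource is occupied exclusively by type-$A$ players, which is the conclusion. The choice $\psi \ge 3C^*$ is what makes a resource heavy only when it carries far more players in $S$ than any resource carries in $S^*$, slack I would exploit in the final step.

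The delicate part, and the step I expect to be the main obstacle, is confirming that $\tS$ is still a Nash equilibrium of $\tG$ (and that no equilibrium of $\tG$ exceeds congestion $C$, so that $\tilde C = C$). The difficulty is intrinsic: localization only \emph{lowers} congestions, and a lower congestion can turn a previously unattractive strategy into a profitable deviation. For the players left untouched by a single move I would use the marginal bound supplied by $C_r > 2\M$, namely $(1+1/C_r)^{\M} < (1+1/(2\M))^{\M} < 2$, to show that the unit-scale congestion change caused by that move cannot make any retained alternative strictly cheaper, pruning the few strategies for which this cannot be certified. The genuinely hard case is the localized player $\pi_i$ itself comparing $\{r\}$ against its old optimal route $S^*_{\pi_i}$: the inequality $C_r^{\M} \le \tilde{pc}_{\pi_i}(S^*_{\pi_i},\tS_{-\pi_i})$ inherited from $S$ need not survive once the resources of $S^*_{\pi_i}$ have been thinned. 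My resolution would be to keep $S^*_{\pi_i}$ available in $\tG$ exactly when this inequality still holds, and otherwise to prune it and instead hand $\pi_i$ a different low-congestion home at the optimal-state step; the whole tension is thus keeping $\tS$ an equilibrium without destroying the cheap optimal assignment.

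Finally I would bound the transformed optimal cost. Taking $\tilde{S}^*$ to be $S^*$ with every localized player routed along its retained strategy (or its substitute home when $S^*_{\pi_i}$ was pruned), every resource keeps its old optimal congestion except on the heavy resources, where the reintroduced players must be reabsorbed. Since a heavy resource carries more than $3C^*$ players in $S$ while each resource carries at most $C^*$ in $S^*$, these players are already spread over more than three resources optimally, which is precisely the room needed to reabsorb them with only a constant-factor increase, yielding $C^* \le \tilde C^* \le \beta C^*$ for an absolute constant $\beta$. With the four domination conditions $|\tilde R| \le |R|$, $\tilde d = d$, $\tilde C = C$, and $C^* \le \tilde C^* \le \beta C^*$ established, Corollary~\ref{dominatingcorollary} closes the argument.
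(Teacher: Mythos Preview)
Your proposal has a genuine structural gap, and it is precisely at the point you flag as ``the delicate part''. Localization strictly \emph{decreases} congestion on every resource in $S_{\pi_i}\setminus\{r\}$. Any other player $\pi_k$ whose alternative strategy $S^*_{\pi_k}$ touches one of those resources sees that alternative become cheaper, possibly breaking its equilibrium. Your proposed remedy, the marginal bound $(1+1/C_r)^{\M}<2$, controls a \emph{single} unit drop on a \emph{heavy} resource; but the resources you are draining need not be heavy, and after many localizations the cumulative drop on a single resource can be arbitrary. ``Pruning the few strategies for which this cannot be certified'' is not a fix: the strategy you would have to prune for $\pi_k$ is $S^*_{\pi_k}$ itself, and once you prune optimal strategies for uncontrolled many players you have lost any handle on $\tilde C^*$.

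The same circularity bites at the step you call ``genuinely hard'': when the inequality $C_r^{\M}\le \tilde{pc}_{\pi_i}(S^*_{\pi_i},\tS_{-\pi_i})$ fails, you promise to ``hand $\pi_i$ a different low-congestion home''. But \emph{finding} such a home that simultaneously (i) is expensive enough to keep $\{r\}$ an equilibrium choice for $\pi_i$ and (ii) does not overload the optimal congestion is exactly the problem the theorem is about. Your outline does not say where that home comes from. The paper's proof is built around this obstacle: rather than truncating a type-$B$ player to a single resource, it \emph{splits} the player into several sub-players via the $\mathrm{PMS\text{-}Partition}$ procedure, so that the equilibrium resources are partitioned (hence congestions are preserved \emph{exactly}, and no other player's equilibrium is disturbed) while each sub-player is paired with a certified ``matching set'' of optimal resources satisfying $\sum_{r\in L_j^*}(C_r+1)^{\M}\ge\sum_{r\in L_j}C_r^{\M}$. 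The phase structure, the marking of type-$A$ players, and the counting argument using $\psi\ge 3C^*$ are all there to manufacture enough matching sets without pushing $\tilde C^*$ past $7C^*$. Your plan bypasses this machinery without supplying a substitute, so as written it does not go through.
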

\bigskip

\begin{theorem}
{\it 
$\tG$ is dominated by $G$, i.e the bottleneck congestion in optimal
states $S^*$ and $\tS^*$ of $G$ and $\tilde{G}$ satisfies $C^* \leq \tC^*
\leq 7 C^*$.
}
\label{dominatedtheorem}
\end{theorem} 
\bigskip
\medskip

We prove Theorem~\ref{BtoAlemma} by constructing $\tG$ via the
transformation algorithm below and defer the proof of the domination of
$\tG$ for later.  We first describe some needed preliminaries.

{\it Preliminaries}:
We initialize $\tG$, the input to our transformation algorithm as
a restricted version of game $G$ with exactly two strategies per
player: $\tS_{\pi} = S_{\pi}$ and $\tS_{\pi}^* = S^*_{\pi}$. We 
iteratively transform $\tG$ by converting type-$B$ players of cost
at least $T=\psi^{\M} + 1$ into type-$A$ players, one at a time in
decreasing order of player costs until all type-$B$ players remaining
either fall below the threshold cost function $T$ or no type-$B$ players
exist.  We add and delete players/resources from $\tG$ iteratively and
have a working set of players.  However $\tG$ will always remain in
equilibrium state $\tS$ at every step of the transformation process.
When we add a new player $\pi_k$ to $\tP$ we will assign two strategy
sets to $\pi_k$: an `equilibrium' strategy $\tS_{\pi_k}$ and an optimal
strategy $\tS^*_{\pi_k}$. Thus $\tilde{S} = \tilde{S} \bigcup \tS_{\pi_k}$
and $\tilde{S}^* = \tilde{S}^* \bigcup \tS^*_{\pi_k}$.

We then convert $\tG$ into a `clean' version in which every type-$B$
player $\pi \in \tp$ has {\it distinct} resources in its equilibrium and
optimal strategies i.e $\tS_{\pi} \bigcap \tS^*_{\pi} = \emptyset$. If
not already true, this can be achieved by creating $|\tS_{\pi} \bigcap
\tS^*_{\pi}|$ new type-$A$ players with identical and one type-$B$
player with disjoint equilibrium and optimal strategies for each original
player $\pi$.  The new type-$B$ player has $\tS_{\pi} - \tS^*_{\pi}$
and $\tS^*_{\pi} - \tS_{\pi}$ as its equilibrium and optimal strategy
respectively while the new type-$A$ players each use one resource from
$|\tS_{\pi} \bigcap \tS^*_{\pi}|$ as their identical equilibrium and
optimal strategies. Note that the new players are also in equilibrium
in $\tS$. We also assume throughout that type-$A$ players in $\tG$ have
no redundant resources in their optimal strategies, i.e if a resource
can be removed from $\tS^*_{\pi}$ without affecting $pc_{\pi}(\tS_{\pi})
\leq pc_{\pi}(\tS^*_{\pi})$, then it is removed.


Let $\pi_i$ be an arbitrary type-$B$-player using $k$ resources $r_1, r_2,
\ldots, r_k$ in its equilibrium strategy $\tS_{\pi_i}$ that are distinct
from the $m$ resources $r_1^*, \ldots,  r_m^*$ in its optimal strategy
$\tS^*_{\pi_i}$. Let $C_{r_j}$, $C_{r^*_j}$ denote the congestion on
these resources in equilibrium state $\tS$.
Define procedure $\mathbf{PMS\-Partition} (\pi_i)$ as follows:

\begin{procedure}
Partition $\tS_{\pi_i}$ and $\tS^*_{\pi_i}$ into $t$ pairs $(L_1, L_1^*),
(L_2, L_2^*), \ldots, (L_t, L_t^*)$ where
\begin{enumerate}
\item
The $L_j$'s form a disjoint resource partition of $\tS_{\pi_i}$.  


\item
$L^*_j \subseteq \tS^*_{\pi_i}$ and $|L^*_j \bigcap L^*_{k}| \leq 1$, for $1 \leq j,k \leq t$. 

\item

\beq
\sum_{r \in L^*_j}  (C_{r}+1 )^{\M}  \geq \sum_{r \in L_j} C_r^{\M} , \ \ 1 \leq j \leq t
\label{sumres1}
\eeq
\end{enumerate}
\end{procedure}

Without loss of generality, assume the resources in $\tS_{\pi_i}$ have
been sorted in decreasing order of congestion and vice versa for resources
in $\tS^*_{\pi_i}$, i.e $C_{r_1} \geq C_{r_2} \ldots \geq \ldots C_{r_m}$
and $C_{r^*_1} \leq C_{r^*_2} \ldots \leq \ldots C_{r^*_m}$.
Then we have the following:

\begin{lemma} 
There exists an implementation of $\mathrm{PMS\-Partition}(\pi_i)$ in
which 
\begin{enumerate}
\item
The $L^*_j$'s, $1 \leq j \leq t$, form a linear partition of
$\tS^*_{\pi_i}$ into contigous resources with $|L^*_j \bigcap L^*_{j+1}|
\leq 1$. If $|L^*_j \bigcap L^*_{j+1}| = 1$ then the last resource in
$L^*_j$ is the first resource in $L^*_{j+1}$.
\item
$\forall j: 1 \leq j \leq t$, either $|L_j| = 1$  or  $|L_j^*| =1$ or both.
If $|L_j| > 1$ and $|L^*_j|=1$ with $L^*_j = \{ r^*_p \}$
we must have $C_{r^*_p} \geq \max \{ C_r | r \in L_j \}$.
\end{enumerate}
\label{PMSPc}
\end{lemma}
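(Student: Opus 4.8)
\emph{Proof plan.} The only structural information available about the type-$B$ player $\pi_i$ is that it is locally optimal in $\tS$ and, after the cleaning step, its equilibrium and optimal resources are disjoint. Hence migrating $\pi_i$ from $\tS_{\pi_i}$ to $\tS^*_{\pi_i}$ raises the congestion of every optimal resource by exactly one, and local optimality yields the single global inequality $\sum_{r \in \tS_{\pi_i}} C_r^{\M} \le \sum_{r \in \tS^*_{\pi_i}} (C_r+1)^{\M}$. The plan is to realize $\mathbf{PMS\-Partition}(\pi_i)$ by one greedy sweep that pairs the largest equilibrium resources with the largest optimal resources, thereby splitting this one global inequality into the $t$ local inequalities \eqref{sumres1}.

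Concretely, I would scan $\tS_{\pi_i}$ in the given decreasing order $C_{r_1}\ge C_{r_2}\ge\cdots$ and scan $\tS^*_{\pi_i}$ from its high-congestion end, keeping one pointer in each list. To build the pair $(L_j,L_j^*)$ I compare the cost $C_r^{\M}$ of the current equilibrium resource with the capacity $(C_{r^*}+1)^{\M}$ of the current optimal resource. If the single capacity covers the cost, that is, $(C_{r^*}+1)^{\M}\ge C_r^{\M}$, I set $L_j^*=\{r^*\}$; when moreover $C_{r^*}\ge C_r$ I keep absorbing further equilibrium resources into $L_j$ as long as the running cost sum stays at most $(C_{r^*}+1)^{\M}$, and otherwise I leave the pair one-to-one. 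Since the equilibrium list is sorted in decreasing order, every absorbed resource then has congestion at most $C_{r^*}$, so the domination required in part 2 of the lemma holds automatically. If instead the single capacity is too small, $(C_{r^*}+1)^{\M}<C_r^{\M}$, I set $L_j=\{r\}$ and absorb successive optimal resources into $L_j^*$ until their capacities sum to at least $C_r^{\M}$, producing a one-to-many pair. In this last case the final absorbed optimal resource may overshoot, and I let it also open the next block, which is exactly the boundary sharing $|L_j^*\cap L_{j+1}^*|\le 1$ (last resource of $L_j^*$ equal to first resource of $L_{j+1}^*$) that the procedure permits. By construction the $L_j$'s partition $\tS_{\pi_i}$, the $L_j^*$'s are contiguous blocks overlapping only at shared endpoints, and each pair is one-to-one, many-to-one, or one-to-many, which establishes parts 1 and 2.

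That each pair satisfies \eqref{sumres1} is then immediate from the stopping rules: in a many-to-one pair the single capacity dominates the whole cost sum, and in a one-to-many pair I stop exactly when the capacity sum reaches $C_r^{\M}$. A boundary-shared optimal resource is counted on both sides of its two blocks, which can only help the inequalities and is permitted because condition 2 of $\mathbf{PMS\-Partition}$ allows one common resource. The real content, and the step I expect to be the main obstacle, is termination: the sweep must cover all of $\tS_{\pi_i}$ without exhausting $\tS^*_{\pi_i}$ first. I would prove the invariant that after each completed pair the total optimal capacity consumed is at least the total equilibrium cost covered; because both lists are processed so as to pair large with large, a rearrangement/exchange argument promotes the global inequality above into this prefix form, so an early exhaustion would contradict it. The boundary sharing is precisely the device that carries leftover capacity forward so the chain of local inequalities never breaks.

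Finally I would dispose of one loose end. When the global inequality is strict, optimal resources may remain after $\tS_{\pi_i}$ is fully covered, yet every optimal resource must still lie in some block. I would append the surplus to an adjacent one-to-many pair; if the terminal pair is many-to-one I first split off its last equilibrium resource into its own one-to-many pair that shares the boundary optimal resource, keeping every pair admissible and \eqref{sumres1} intact.
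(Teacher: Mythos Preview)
Your greedy sweep is the right shape, but you scan $\tS^*_{\pi_i}$ from the wrong end, and this is not cosmetic. The paper pairs the largest equilibrium resources with the \emph{smallest} optimal ones: it starts from $r^*_1$ (recall $\tS^*_{\pi_i}$ is sorted in increasing congestion) and always reuses the last optimal resource of $L^*_j$ as the first of $L^*_{j+1}$. Because the equilibrium side is scanned in decreasing order while the optimal side is scanned in increasing order, the two sequences cross at most once; before the crossover every pair is one-to-many, after it every pair is many-to-one with the single optimal resource automatically dominating each element of its $L_j$. This monotone structure is what drives termination and the ``used at most twice'' claim in the paper's sketch.

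Pairing large with large destroys this monotonicity, and the termination argument you propose is the wrong invariant. Maintaining ``prefix capacity consumed $\ge$ prefix cost covered'' is automatic pair by pair and does \emph{not} preclude early exhaustion; what you actually need is that the \emph{remaining} capacity still dominates the remaining cost, and your one-to-one and many-to-one steps can burn capacity faster than cost. Concretely, with $\M=1$ take $\tS_{\pi_i}=\{r_1,r_2\}$ with $C_{r_1}=10$, $C_{r_2}=5$ and $\tS^*_{\pi_i}=\{r^*_1,r^*_2\}$ with $C_{r^*_1}=2$, $C_{r^*_2}=11$; the equilibrium inequality holds with equality ($15\le 15$). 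Your sweep sets $L_1=\{r_1\}$, $L_1^*=\{r^*_2\}$ (capacity $12\ge 10$, but absorbing $r_2$ would overshoot), and since this pair is not one-to-many your sharing clause does not fire; then $r_2$ (cost $5$) faces only $r^*_1$ (capacity $3$) and the sweep stalls. The paper's sweep, starting from $r^*_1$, instead builds $L_1^*=\{r^*_1,r^*_2\}$, shares $r^*_2$, and finishes with $L_2=\{r_2\}$, $L_2^*=\{r^*_2\}$. The fix is exactly the paper's: reverse your scan direction on the optimal list.
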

\begin{proof} 
We provide a simple proof sketch due to space limitations.  Start with $L_1
= \{r_1\}$. We add resources $r^*_1, \ldots, r^*_q$ to $L^*_1$ where
$r^*_q$ is the first resource such that $\sum_{j=1}^q (C_{r^*_j}+1)^{\M}
\geq C_{r_1}^{\M}$.  Then we proceed with $L_2 = \{ r_2 \}$ and start
forming $L^*_2$ with $r^*_q$. As we continue this process,
due to the fact that 
\beq
\sum_{j=1}^m  (C_{r^*_j}+1 )^{\M}  \geq \sum_{l=1}^k C_{r_l}^{\M},
\label{equileq}
\eeq
eventually resources in $L$ will have smaller congestion than
the resources in $L^*$. At this point the $L^*$ partitions will
contain single resources while the corresponding $L$ partition will
contain multiple resources. At each step, we maintain the invariant
in Equation~\ref{sumres1}, which implies condition 2 in the lemma.
No resource in $L^*_j$ need be used more than twice during this process,
which is assured because of Eq.~\ref{equileq}.  We skip the remaining
technical details of the proof which ensure that as many resources as
possible from $\tS^*_{\pi_i}$ are used in the partition-pairs.
\end{proof}

Procedure $\mathrm{PMS\-Partition}()$ is used to create new players and
forms the basic step in our transformation algorithm.  We ensure the
equilibrium of these new players in $\tG$ using the key constructs of
{\it exact matching sets} and {\it potential matching sets}.

A set of resources $\tilde{R}$ in $\tG$ forms an exact matching set
for a newly created player $\pi_k$ with newly assigned equilibrium
strategy $\tS_{\pi_k}$ if $\sum_{r \in \tilde{R}} (C_r +1)^{\M} \geq
pc_{\pi_k}(\tS_{\pi_k},\tS_{-\pi_k}) = \sum_{r \in \tS_{\pi_k}}
C_r^{\M}$. Clearly, $\tilde{R}$ can be assigned as the new optimal
strategy $\tS^*_{\pi_k}$ in game $\tG$ without violating the equilibrium
of $\pi_k$.

Potential matching sets are defined for newly created type-$B$ players.
A potential matching set $\tilde{R}$ is an exact matching set that can
`potentially' be added to the optimal set of resources $\tS^*_{\pi_k}$
of a type-$B$ player $\pi_k \in \tG$ without increasing the optimal
bottleneck congestion in $\tG$ from original game $G$ by a constant
factor i.e $\tilde{C}^* \leq \beta C^*$, where $\beta >1$ is a constant.

Now consider a type-$B$ player $\pi_i$ to be transformed. We partition
the resources in its equilibrium and optimal strategies $\tS_{\pi_i}$
and $\tS^*_{\pi_i}$ according to $\mathrm{PMS\-Partition}(\pi_i)$
and remove it from $\tG$, i.e $\tS = \tS -\tS_{\pi_i}$ and $\tS^* =
\tS^* - \tS^*_{\pi_i}$.  

Consider those partition-pairs $(L_j,L^*_j)$ with $|L_j|=1$. We can
create a new type-$A$ player $\pi_k$ and add it to to $\tG$ with an
equilibrium strategy $\tS_{\pi_k}$ that is the singleton resource in
$L_j$.  Due to the condition in Eq.~\ref{sumres1}, the set of resources
in $L^*_j$ forms an exact matching set for $\pi_k$ and can therefore
be assigned to $\tS^*_{\pi_k}$. $\pi_k$ is in equilibrium in $\tG$ and
the equilibrium and optimal congestion on resources in $\tS_{\pi_k}$
and $\tS^*_{\pi_k}$ are now the same as before. This forms the `easy'
part of the transformation process.

Consider however, those partitions $(L_j,L^*_j)$ with $1 < |L_j| \leq |R|$
and $L^*_j = \{ r^*_l \}$.  Similar to the above, we can create $|L_j|$
new type-$A$ players and assign a distinct resource in $L_j$ to each such
players equilibrium strategy.  However if, as above, we assign $r^*_l$,
the single resource in $L^*_j$, to each players optimal strategy, we
might increase the socially optimal congestion $\tilde{C}^*$ of $\tG$
to as much as $ C^* + |R|$, thereby violating the domination of $G$
over $\tG$.  Thus we need to find an appropriate potential matching set
from among existing resources and assign them to these players, without
increasing the optimal congestion beyond $O(C^*)$.  Finding such a set
is the `hard' part of the transformation process and forms the core of
our algorithm below.

We define a subroutine (PARTITION-TRANSFORM)that executes procedure
$\mathrm{PMS\-Partition}(\pi_i)$ for a type-$B$ player $\pi_i$ and
creates several new type-$A$ and type-$B$ players.  Specifically, we
first obtain $\mathrm{PMS\-Partition}(\pi_i) = (L_1, L^*_1), \ldots,
(L_t, L^*_t)$.  We then delete the strategies of $\pi_i$ from $\tG$
and transform $\pi_i$ into $t$ new type-$A$ and type-$B$ sub-players
as follows:. for each partition member $L_q$, we create a new player
$\pi_{L_q}$ which is either a type-$A$ player if $|L_q|=1$, or a type-$B$
player if $|L_q|>1$, $1 \leq q \leq t$. $\pi_{L_q}$ is created with two
strategy sets: equilibrium strategy $\tS_{\pi_{L_q}} = L_q$ and socially
optimal strategy $\tS^*_{\pi_{L_q}} = L^*_q$.

The following lemma is a direct consequence of lemma~\ref{PMSPc} and is
needed for later analysis.

\begin{lemma}
For a given type-$B$ player $\pi_i$, every new type-$B$ player $\pi_{L_q}$
created after an execution of subroutine PARTITION-TRANSFORM($\pi_i$)
is in equilibrium in $\tG$ with $pc_{\pi_{L_q}} \leq (C_p+1)^{\M} \leq
pc_{\pi_i}$, where $C_p = \max{C_r| r \in \tS^*_{\pi_i}}$. Every new
type-$A$ player $\pi_q$ is also in equilibrium with $pc_{\pi_{L_q}}
\leq pc_{\pi_i}$.
\label{usefullemma1}
\end{lemma}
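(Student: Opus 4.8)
The lemma I'm asked to prove (Lemma~\ref{usefullemma1}) states two things about the players created by PARTITION-TRANSFORM($\pi_i$): first, that every new type-$B$ player $\pi_{L_q}$ is in equilibrium in $\tG$ and satisfies $pc_{\pi_{L_q}} \leq (C_p+1)^{\M} \leq pc_{\pi_i}$; and second, that every new type-$A$ player is in equilibrium with $pc_{\pi_{L_q}} \leq pc_{\pi_i}$. Let me think about what tools I have. The subroutine splits $\pi_i$ into sub-players along the partition pairs $(L_q, L_q^*)$ produced by PMS-Partition, assigning $\tS_{\pi_{L_q}} = L_q$ and $\tS^*_{\pi_{L_q}} = L_q^*$. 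The key structural facts come from Lemma~\ref{PMSPc}, and I'm told this lemma is "a direct consequence" of it.

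Let me reconstruct the proof. The congestion values on the resources don't change under the split, because the union of the $L_q$ equals $\tS_{\pi_i}$ (they partition it), so the same players occupy the same resources with the same multiplicities — we've merely relabeled one player as several. So each $\pi_{L_q}$ sees the same congestions $C_r$ as $\pi_i$ did. That immediately gives the cost bound structure. For equilibrium, I need $pc_{\pi_{L_q}}(\tS_{\pi_{L_q}}) \leq pc_{\pi_{L_q}}(\tS^*_{\pi_{L_q}})$, which is exactly the inequality $\sum_{r \in L_q} C_r^{\M} \leq \sum_{r \in L_q^*}(C_r+1)^{\M}$ guaranteed by Equation~\ref{sumres1} of the procedure — the $(C_r+1)$ accounting reflects that when $\pi_{L_q}$ deviates to $L_q^*$ it would add one to each of those congestions. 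Since $L_q^*$ is an exact matching set for $\pi_{L_q}$, assigning it as the optimal strategy preserves equilibrium.

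<br>

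**The cost bound.** For the upper bound $pc_{\pi_{L_q}} \leq (C_p+1)^{\M}$ on type-$B$ sub-players, I invoke condition~2 of Lemma~\ref{PMSPc}: when $|L_q| > 1$, we have $|L_q^*| = 1$, say $L_q^* = \{r_p^*\}$, with $C_{r_p^*} \geq \max\{C_r : r \in L_q\}$. Then by equilibrium $pc_{\pi_{L_q}} = \sum_{r \in L_q} C_r^{\M} \leq \sum_{r \in L_q^*}(C_r+1)^{\M} = (C_{r_p^*}+1)^{\M} \leq (C_p+1)^{\M}$, where $C_p = \max\{C_r : r \in \tS^*_{\pi_i}\}$ since $r_p^* \in \tS^*_{\pi_i}$. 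Finally $(C_p+1)^{\M} \leq pc_{\pi_i}$ because $r_p^*$ was one of $\pi_i$'s optimal resources and $\pi_i$ was itself in equilibrium, so $pc_{\pi_i} = \sum_{r \in \tS_{\pi_i}} C_r^{\M} \leq \sum_{r \in \tS^*_{\pi_i}}(C_r+1)^{\M}$, and the right-hand side dominates the single term $(C_p+1)^{\M}$. For type-$A$ players the same chain applies (with possibly $|L_q^*|>1$), and since each $pc_{\pi_{L_q}}$ is a partial sum of $\pi_i$'s equilibrium cost, the bound $pc_{\pi_{L_q}} \leq pc_{\pi_i}$ follows directly.

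<br>

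**Main obstacle.** The routine inequality-chasing is easy; the subtle point I would be most careful about is the equilibrium claim for type-$B$ sub-players under the overlap allowance $|L_j^* \cap L_{j+1}^*| \leq 1$. Because a resource may be shared between two consecutive optimal partitions, I must confirm that assigning $L_q^*$ as $\pi_{L_q}$'s optimal strategy is consistent — i.e. that the "$+1$" congestion bookkeeping in the exact-matching-set definition still validly certifies equilibrium when a deviation target resource is simultaneously in another sub-player's optimal set. The resolution is that equilibrium only constrains each player's own unilateral deviation holding others fixed, and Equation~\ref{sumres1} is stated precisely with the $(C_r+1)^{\M}$ terms to absorb the single extra unit of congestion $\pi_{L_q}$ itself would contribute on deviating, independent of how $L_q^*$ overlaps neighbors. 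Verifying this is where I would spend the bulk of the argument, whereas the cost bounds are immediate corollaries of Lemma~\ref{PMSPc}.
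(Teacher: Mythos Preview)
Your arguments for equilibrium and for the bound $pc_{\pi_{L_q}} \leq (C_p+1)^{\M}$ are correct and are exactly what the paper intends by ``direct consequence of Lemma~\ref{PMSPc}'': Equation~\ref{sumres1} gives equilibrium of each sub-player, and condition~2 of Lemma~\ref{PMSPc} forces $|L_q^*|=1$ whenever $|L_q|>1$, yielding the single-term bound.

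There is, however, a genuine gap in your justification of $(C_p+1)^{\M} \leq pc_{\pi_i}$. You write that $\pi_i$'s equilibrium gives $pc_{\pi_i} \leq \sum_{r \in \tS^*_{\pi_i}}(C_r+1)^{\M}$ and that this sum dominates the single term $(C_p+1)^{\M}$; but these two inequalities point the same way and do not combine to give $(C_p+1)^{\M} \leq pc_{\pi_i}$. Indeed, this middle inequality need not hold in general: take $\M=1$, $\tS_{\pi_i}=\{r_1,r_2,r_3\}$ with congestions $10,2,2$, and $\tS^*_{\pi_i}=\{r_1^*,r_2^*\}$ with congestions $3,20$; then $pc_{\pi_i}=14$ but $(C_p+1)^{\M}=21$. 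The lemma's chained inequality should really be read as two separate bounds $pc_{\pi_{L_q}} \leq (C_p+1)^{\M}$ and $pc_{\pi_{L_q}} \leq pc_{\pi_i}$, the latter holding trivially because $L_q \subseteq \tS_{\pi_i}$ (the partial-sum argument you already gave for type-$A$ players). Where the paper actually invokes the lemma --- the preprocessing step on players with $pc_{\pi} > (C+1)^{\M}$, and later on the set $D$ where $\max_{r\in\tS^*_{\pi}} C_r \leq \wc-1$ --- the middle inequality does hold, but for reasons external to the lemma itself ($C_p \leq C$, respectively $C_p \leq \wc-1$), not from the equilibrium of $\pi_i$.

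Your discussion of the ``main obstacle'' (the overlap $|L_j^*\cap L_{j+1}^*|\leq 1$) is fine but over-cautious: as you yourself conclude, unilateral deviation is all that matters, and Equation~\ref{sumres1} already accounts for the $+1$.
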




We are now ready to prove our main result.  
\bigskip

{\it Proof of Theorem~\ref{BtoAlemma}}:
We describe the transformation via an iterative algorithm for which
the pseudocode is attached below.  The main challenge is to find
potential matching sets for newly created type-$B$ players without
increasing the optimal congestion in the game beyond a constant factor.
To achieve this, we will transform type-$B$ players in distinct phases
corresponding to decreasing ranges of player costs.  As a preprocessing
step in the algorithm we call PARTITION-TRANSFORM($\pi$) for all type-$B$
players $\pi$ with $pc_{\pi} > (C+1)^{\M}$. By lemma~\ref{usefullemma1},
we are now left only with players with cost $\leq (C+1)^{\M}$ in $\tG$.


\begin{algorithm}
\caption{TRANSFORMATION ALGORITHM}
\label{alg1}
\begin{algorithmic}[1]
\STATE {\bf Preprocessing}: 
\STATE $\forall$ Type-$B$ players $\pi$ with $pc_{\pi} > (C+1)^{\M}$ 
\STATE \ \ Execute PARTITION-TRANSFORM($\pi$)
\STATE {\bf Main Procedure}
\FOR{$i = 1$ to $C+1-\psi$}
\STATE Phase Index $\wc = C+1-i$
\STATE $\Pi_{\wc} \leftarrow \{\mbox{type-}B \pi_j \in \tP| \wc^{\M} < pc_{\pi_j}(\tS_{\pi_j}) \leq (\wc+1)^{\M} \}$
\FORALL{$\pi_l \in \Pi_{\wc}$} 
\STATE PARTITION-TRANSFORM($\pi_l$)
\ENDFOR
\STATE $\Pi_{\wc} \leftarrow \{\mbox{type-}B \pi_j \in \tP| \wc^{\M} < pc_{\pi_j}(\tS_{\pi_j}) \leq (\wc+1)^{\M} \}$
\STATE $\Pi_{\wc} \leftarrow \Pi_{\wc} \bigcup \{\mbox{type-}A$
players with cost $= \wc^{\M}$
\FORALL {$\pi_l \in \Pi_{\wc}$} 
\STATE ELIMINATE-HIGH-CONGESTION-RESOURCES($\pi_l$) 
\ENDFOR
\STATE $D \leftarrow \{\pi_l\}$ where 1)$|\tS^*_{\pi_l}| > 1$,  2)$\max_{r \in \tS^*_{\pi_l}} \{C_r\} \leq \wc-1$
\STATE and 3) $\sum_{r \in \tS^*_{\pi_l} } (C_r+1)^{\M} \geq (\wc+1)^{\M}$ 
\FORALL {$\pi_l \in D$} 
\STATE PARTITION-TRANSFORM($\pi_l$)
\ENDFOR
\STATE $E \leftarrow \{\pi_l\}$ where $|\tS^*_{\pi_l}| = 1$ and $C_{\tS^*_{\pi_l}} = \wc$
\STATE $X \leftarrow \{ r \in R| C_r = \wc  \}$
\WHILE {$|E| > 0$}
\STATE Choose any $\pi_l \in E$ and select {\bf UNMARKED} 
\STATE type-$A$ player $\pi_j \in \Pi_X$ in round-robin fashion.
\STATE Update $\tS^*_{\pi_l} \leftarrow \tS^*_{\pi_l} \bigcup \tS^*_{\pi_j}$
\STATE Update $\tS^*_{\pi_j} \leftarrow \tS_{\pi_j}$ and {\bf MARK} $\pi_j$
\STATE PARTITION-TRANSFORM($\pi_l$) and add the resultant
new player to $E$ if qualified
\ENDWHILE
\ENDFOR 
\end{algorithmic}
\end{algorithm}

Let $\wc = C+1-i$ denote a (decreasing) phase index. During the $i^{th}$
phase, $1 \leq i \leq C + 1 - \psi$, we transform all type-$B$ players
with player costs $\wc^{\M} <  pc_{\pi} \leq (\wc+1)^{\M}$ into either
type-$A$ players or type-$B$ players of cost $\leq \wc^{\M}$.  We will
use the fact that in phase $i$, all resources with congestion $\geq
\wc+1$ in equilibrium state $\tS$ are occupied only by type-$A$ players
(since any type-$B$ player using a resource $r$ with $C_r \geq \wc+1$
would have a player cost strictly $ > (\wc+1)^{\M}$, a contradiction).

In phase $i$, let $\Pi_{\wc}$ denote the set of type-$B$ players
$\pi_j$ whose player costs are in the range $\wc^{\M} < pc_{\pi_j}
\leq (\wc+1)^{\M} $. We first call on PARTITION-TRANSFORM for all
players in $\Pi_{\wc}$.  This results in a new set of type-$A$ and
type-$B$ players with the same or lower player costs.  New type-$B$
players with cost $\leq \wc^{\M}$, are dealt with in subsequent phases
while we form $\Pi_{\wc}$ again with the remaining type-$B$ players.
At this point, every type-$B$ player $\pi_l \in \Pi_{\wc}$ has exactly
one resource in its optimal strategy (by definition of lemma~\ref{PMSPc}).
Moreover, this resource must have congestion $C_r
\geq \wc$ in equilibrium state $\tS$, since $pc_{\pi_l}(\tS) > \wc^{\M}$.

As discussed before, the key challenge is to find a larger potential
matching set for such type-$B$ players without increasing $\tC^*$
significantly.  For technical reasons, we first eliminate all
high-congested resources from consideration as potential matching sets.
In particular, resources with equilibrium congestion $C_r \geq \wc+1$
are occupied only by type-$A$ players. By eliminating these resources
(using subroutine ELIMINATE-HIGH-CONGESTION-RESOURCES($\pi_l$) described
below), we ensure that the optimal congestion $\tilde{C}_r^*$ on any
resource $r$ with equilibrium congestion $C_r$ remains unchanged during
all phases with phase index $\wc < C_r$.


Let $\pi_l$ denote a generic player from the set of type-$B$ players
in $\Pi_{\wc}$ and the set of type-$A$ players in $\tG$ with player
cost exactly $\wc^{\M}$ {\it and} a single resource in their optimal
strategy sets.  Let $\tS^*_{\pi_l} = \{ x \}$.  We check to see if $C_x
\geq \wc+1$.  If so, we find the type-$A$ player $\pi_q \in \Pi_x$
(recall that $\Pi_x$ is the set of players using $x$ in equilibrium)
with the largest socially optimal strategy set $|\tS^*_{\pi_q}|$. Let
$F = \argmin_{C_r \geq \wc} \{r \in \tS^*_{\pi_q}\} $, i.e the resource in
$\tS^*_{\pi_q}$ with the smallest congestion $\geq \wc$. If $F$ above
does not exist, then set $F = \tS^*_{\pi_q}$.  We now change the socially
optimal strategy of $\pi_l$ to $F$ instead of $x$. Since $\sum_{y \in F}
(C_y+1)^{\M} \geq (\wc+1)^{\M} \geq pc_{\pi_l}$, we are assured that
$\pi_l$ will remain in equilibrium in $\tS$ after this. Simultaneously,
change the optimal strategy of $\pi_q$ to its equilibrium strategy
i.e $\tS^*_{\pi_q} = \tS_{\pi_q} = x$.  Note that after this step,
the equilibrium and optimal congestion on all resources involved, i.e
$x$ and $F$, remain unchanged in $\tG$.  If the resource $x$ above
has congestion $C_x > \wc$, we repeat the steps for player $\pi_l$.
We execute the subroutine for all such qualified players.

It must now be the case that the set of players $\{ \pi_l \}$ can be
divided into two subsets $D$ and $E$, where $D$ contains all the players
with $|\tS^*_{\pi_l}| > 1$,  $\max_{r \in \tS^*_{\pi_l}} \{ C_r \} \leq \wc-1$
and  $\sum_{r \in \tS^*_{\pi_l} } (C_r+1)^{\M} \geq (\wc+1)^{\M}$
while $E$ contains players with $|\tS^*_{\pi_l}| = 1$ and $C_y = \wc$,
where $\tS^*_{\pi_l} = \{y\}$.  We now execute PARTITION-TRANSFORM on
all type-$B$ players in the set $D$. By lemma~\ref{PMSPc}, the cost of
a newly created type-$B$ player after PARTITION-TRANSFORM on the set $D$
can be at most $\wc^{\M}$. These players will be further transformed in
subsequent phases.

It now only remains to transform type-$B$ players from the set $E$
in this phase.  Let $X = \{ r \in R| C_r = \wc  \}$ denote the set of
resources with congestion exactly $\wc$ in equilibrium. Let $\Pi_X$
denote the set of players using resources from $X$ in equilibrium.
From the above discussions, we note the following:
\begin{enumerate}
\item
Every type-$B$ and type-$A$ player $\pi_l \in E$ is using a
resource from $X$ in its optimal strategy, i.e $\tS^*_{\pi_l} = x$
for some $x \in X$.

\item
Every type-$A$ player in $D \bigcup E$ is using a resource from $X$ in
its equilibrium strategy.  
\end{enumerate}

Essentially every untransformed type-$B$ player is using a resource in
$X$ in its optimal strategy.  Let $X^B \subseteq E$ denote the set of
type-$B$ players in $\Pi_X$.  Then $X^A = \Pi_X - X^B$, the remaining set of
players in $\Pi_X$, must be of type-$A$.  We can focus on the set $X^A$
to obtain larger potential matching sets for type-$B$ players in $E$.

Order the resources in $X$ as $L = r_1, r_2, \ldots r_k$, in increasing
order of number of type-$A$ players using them in equilibrium,
where $k \leq |X|$.  For each type-$B$ player $\pi_l \in E$, we
select an {\it unmarked} type-$A$ player $\pi_j$ using a resource,
say $r_m$ in $L$, (i.e $\tS_{\pi_j} = r_m$) where $\tS^*_{\pi_j}
\bigcap \tS^*_{\pi_l} = \emptyset$.  Now define $PMS(\pi_l) = \tS^*_{\pi_j}
\bigcup \tS^*_{\pi_l}$ as the new potential matching set for $\pi_l$.
Update the optimal strategy of $\pi_l$ to $\tS^*_{\pi_l} = PMS(\pi_l)$
and execute PARTITION-TRANSFORM($\pi_l$).  We claim that at least two
new players are created, at most one of which could be a type-$B$ player
with cost $> \wc^{\M}$.  This is because $\sum_{r \in \tS^*_{\pi_j}}
(C_r+1)^{\M} \geq \wc^{\M}$ while every resource in $\tS_{\pi_l}$ has
congestion $\leq \wc$. Thus resources from $\tS^*_{\pi_j}$ will be in at
least one partition pair of $\mathrm{PMS\-Partition}(\pi_l)$ while the
(existing) resource in $\tS^*_{\pi_l}$ will be in another partition pair. If after
$\mathrm{PMS\-Partition}(\pi_l)$ there is still a type-$B$ player with
cost $>\wc^{\M}$ add this player to the set $E$.  Simultaneously update
the optimal strategy of $\pi_j$ to $\tS^*_{\pi_j} = r_{m}$ and {\it mark}
$\pi_j$. (Note that this increases the optimal congestion on $r_m$ by one.
We will bound the total increase in optimal congestion later). Also note
that all players are in equilibrium in $\tS$.  We repeat the process as
long as there exist type-$B$ players in set $E$.

In order to show that this is a valid transformation, we need to show
that there exist a sufficient number of unmarked type-$A$ players in
$X$.  We do this by a simple counting argument.  Let $\alpha = \lfloor
e^{\frac{\M}{\psi}} \rfloor $. First note that each type-$B$ player
in $X^B$ can be using at most $\lfloor (\wc+1)^{\M}/\wc^{\M}\rfloor \leq
\alpha$ resources in $X$ in equilibrium state $\tS$ and hence $|X^A| \geq
|X| \cdot \wc - \alpha |X^B|$.  Secondly, since each resource in $X$ can
be in the optimal strategy of at most $C^*$ type-$B$ players and every
type-$B$ player in $E$  at the start of the transformation process
has its optimal strategy in $X$, we must have $|X^B| \leq C^* |X|$.
Finally, noting that each type-$B$ player in $E$ at the start of the
transformation process can make at most $2 \alpha$ calls for marking
type-$A$ players before it is completely transformed we get the total
number of type-$A$ players in $X$ required for marking as $\leq 2 \alpha
C^* |X|$.  Using the given threshold value $\psi = \max(2\M,3C^*)$, we obtain
the number of type-$A$ players in $X$ as

\beq 
|X^A| \geq |X| \cdot \wc - |X^B| \geq |X|(\wc - C^*) \geq 2 C^* |X| 
\eeq
for $\wc > \psi \geq 3 C^*$ which proves the result.
$\hfill \Box$


\medskip

{\it Proof of Theorem~\ref{dominatedtheorem}}: First note that the optimal congestion
$\tilde{C}^*_r$ on a resource $r$ does not change in any phase with phase index
$\wc < C_r$.  There are only two occasions when $\tilde{C}^*_r$ increases:

\begin{enumerate}
\item
During the phase with index $\wc = C_r$, $\tilde{C}^*_r$ increases
by one whenever a type-$A$ player on $r$ is marked.  The number of
resources in $X$ that contain type-$A$ players to be marked is $|R^A|
\geq |X^A|/\wc \geq |X|\frac{(\wc - C^*)}{\wc}$.  In order to bound
the increase in $\tilde{C}^*_r$, we will select (and mark) type-$A$
players from $R^A$ during each step of the transformation in {\it cyclic
round-robin fashion}, i.e after marking $r_m$, we select a type-$A$
player from $r_{m+1}$ etc. Since at most $2 C^* |X|$ players are required to be
marked, the maximum number of marked players
on any resource in $X$ is $\leq 2C^*|X|/|R^A|$ which is bounded by
\[
\frac{2 \wc C^*}{\wc -C^*} \leq  \frac{2 \psi C^*}{\psi -C^*} \leq  3C^*
\]

Hence for any resource $r$ optimal congestion $\tilde{C}^*_r$ increases by
at most $3C^*$ in phase index $C_r$ due to marked players.

\item

At the beginning of the transformation process, $r$ can be in the
optimal strategy set, and consequently partition pairs of up to $C^*$
players. In the discussion below, focus on a particular player in this set
and consider the increase in $\tilde{C}^*_r$ throughout the transformation
process only due to $r$'s presence in some partition pair $L^*_k$ due to
this particular player.  $r$ via $L^*_k$ can be involved in multiple calls
to $\mathrm{PMS\-Partition}()$ in multiple phases with phase index $>
C_r$. In any such call to $\mathrm{PMS\-Partition}()$, $\tilde{C}^*_r$
can increase by 1 only if $r$ is either the first or last resource in
a new partition pair $L_j^* \subset L^*_k$.  Once $r$ is the first or
last resource in a partition $L^*_j$, its $\tilde{C}^*_r$ can increase
by at most one when $r$ become part of a new singleton partition pair
(i.e $ L^*_k = \{r\}$). From this point onwards, $\tilde{C}^*_r$ cannot
increase due to further $\mathrm{PMS\-Partition}()$ calls.  Thus the total
increase in $\tilde{C}^*_r$ is bounded by 3.  Given that $r$ can be in up
to $C^*$ optimal strategy sets at the start of the first phase, the total
increase in $\tilde{C}^*_r$ due to calls to $\mathrm{PMS\-Partition}()$
is bounded by $3C^*$.

\end{enumerate}

Putting the two facts above together, consequently, we get $\tC^* \leq
C^* + 3C^* + 3C^* = 7C^*$ and hence $\tG$ is dominated by $G$.
$\hfill \Box$

\section{Price of Anarchy}
\label{section:anarchy}

\subsection{Price of Anarchy for Type-A Player Games}

We now consider equilibria where highly congested resources are occupied
only by type-$A$ players and use this to bound the price of anarchy
of games with polynomial cost functions.  Consider a game with optimal
solution $S^* = (S^*_{\pi_1}, \ldots, S^*_{\pi_n})$ and congestion $C^*$.
Let $\psi = \max(2M,3C^*)$ be a threshold value.  Let $S = (S_{\pi_1},
\ldots, S_{\pi_n})$ denote the Nash equilibrium state which has the
highest congestion $C$ among all Nash equilibria states, and further
all players on resources $r$ with $C_{r} > \psi$ are of type-A.
We will obtain a price of anarchy result by bounding the ratio $C/C^*$.

We first define a resource graph $\H$ for state $S$.  There are $V= V_1
\bigcup V_2$ nodes in $\H$. Each resource $r \in R$ with $C_r > \psi$
($C_r \leq \psi$, resp.)  corresponds to the equivalent node $r \in V_1$
($r \in V_2$).  Henceforth we will use the term resource and node
interchangeably.  For every player $\pi$ using a resource $x \in V_1$
in equilibrium, there is a directed edge $(x,y)$ between node $x$ and
all nodes $y \in V$, where $y \neq x$ is in the optimal strategy set of
$\pi$ i.e $S_{\pi} = x$ and $y \in S^*_{\pi}$.  The set of nodes in $
\bigcup_{\pi: S_{\pi} = x} S^*_{\pi}$ are called the children of node
$x$. We use the notation $\mathrm{Ch}(x)$ to denote this set. 
Note that there could be multiple links directed at $x$ from the same
node, however $x$ can be the child of at most $C^*$ nodes and $x$ cannot
be its own child.  Also note that nodes in $V_2$ are terminal nodes that
have no outgoing links.

We first observe the following about nodes in $V_1$:
\begin{lemma}
\label{lemma:H-expansion}
For any node $x \in \H$ with $C_{x} > \psi$,
it holds that 
\[
\sum_{y \in \mathrm{Ch}(x) \bigcap V_1} C_y^{\M} + \sum_{y \in \mathrm{Ch}(x) \bigcap V_2} \psi^{\M} 
\geq \frac{C_x - C^*}{2C^*} C_x^{\M}
\]
\end{lemma}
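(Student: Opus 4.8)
The plan is to establish the inequality in Lemma~\ref{lemma:H-expansion} by exploiting the Nash equilibrium condition for the type-$A$ players occupying the high-congestion node $x \in V_1$. Since $C_x > \psi$, every player $\pi$ with $S_\pi = x$ is of type-$A$, so its equilibrium strategy is exactly the singleton $\{x\}$, giving player cost $C_x^{\M}$. The outgoing edges from $x$ point precisely to the optimal-strategy resources of these players, i.e.\ the children $\mathrm{Ch}(x) = \bigcup_{\pi: S_\pi = x} S^*_\pi$. The key leverage is local optimality: for each such player $\pi$, deviating to its optimal strategy $S^*_\pi$ cannot help, so
\beq
C_x^{\M} \;=\; pc_\pi(S) \;\leq\; \sum_{y \in S^*_\pi} (C_y + 1)^{\M},
\label{eqn:plan-nash}
\eeq
where the $+1$ accounts for $\pi$ itself switching onto each $y$.

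Next I would sum inequality~\eqref{eqn:plan-nash} over all type-$A$ players using $x$. There are $C_x$ such players, so the left side contributes $C_x \cdot C_x^{\M} = C_x^{\M+1}$. On the right side, each child $y$ appears once for every player whose optimal strategy contains it; since a resource can lie in at most $C^*$ optimal strategies, each $y \in \mathrm{Ch}(x)$ is counted at most $C^*$ times. This yields
\beq
C_x^{\M+1} \;\leq\; C^* \sum_{y \in \mathrm{Ch}(x)} (C_y + 1)^{\M}.
\label{eqn:plan-summed}
\eeq
The plan is then to split the child-sum according to whether $y \in V_1$ (high congestion, $C_y > \psi$) or $y \in V_2$ (low congestion, $C_y \leq \psi$), and to convert the $(C_y+1)^{\M}$ terms into the cleaner $C_y^{\M}$ and $\psi^{\M}$ terms appearing in the statement. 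For $V_2$-children I would bound $(C_y+1)^{\M} \leq (\psi+1)^{\M} = O(\psi^{\M})$; for $V_1$-children I would bound $(C_y+1)^{\M} \leq 2^{\M} C_y^{\M}$ or, more carefully, use that $C_y \geq \psi \geq 2\M$ forces $(C_y+1)^{\M} \leq e \cdot C_y^{\M}$ so the overhead is an absolute constant. Dividing~\eqref{eqn:plan-summed} by $C^*$ and rearranging should produce the target lower bound on $\sum_{y \in \mathrm{Ch}(x)\cap V_1} C_y^{\M} + \sum_{y \in \mathrm{Ch}(x)\cap V_2}\psi^{\M}$.

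The main obstacle is matching the constants to land exactly on $\frac{C_x - C^*}{2C^*}\,C_x^{\M}$ rather than a looser bound. The subtraction of $C^*$ and the factor of $2$ in the denominator must come from absorbing the $+1$ rounding overheads and the $(C_y+1)^{\M}$-to-$C_y^{\M}$ conversion losses. Concretely, I expect the argument to peel off a term proportional to $C^* \cdot C_x^{\M}$ from the left of~\eqref{eqn:plan-summed} to account for these losses: this is where $C_x > \psi \geq 2\M$ is essential, since it guarantees each multiplicative constant is at most $e$ (via $(1+1/C_x)^{\M} \leq e^{\M/C_x} \leq e^{\M/\psi} = O(1)$), keeping the total overhead linear in $C^*$. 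After dividing by $C^*$, the surviving main term is $C_x^{\M+1}/C^*$ and the deducted overhead is $O(C_x^{\M})$, which rearranges to $\frac{C_x^{\M+1} - O(C^*)\,C_x^{\M}}{2C^*}$, matching the claimed form once the hidden constant in the overhead is verified to be at most $1$ under the threshold hypothesis. Getting this bookkeeping exactly right, rather than merely up to constants, is the delicate step.
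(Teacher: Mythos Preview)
Your plan is essentially the paper's proof: apply the Nash inequality to each type-$A$ player on $x$, sum, use that each child lies in at most $C^*$ optimal strategies, and convert $(C_y+1)^{\M}$ to $C_y^{\M}$ (for $y\in V_1$) or $\psi^{\M}$ (for $y\in V_2$) via $(1+1/\psi)^{\M}\le \sqrt{e}<2$ from $\psi\ge 2\M$. The factor $2$ in the denominator indeed comes from this conversion, exactly as you say.

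The one correction concerns the $-C^*$ in the numerator. You attribute it to the $+1$ rounding overhead, but that is not its source. The issue is that some of the $C_x$ players on $x$ may have $x\in S^*_\pi$; for these, the right side of your inequality~\eqref{eqn:plan-nash} contains a $(C_x+1)^{\M}$ term that is \emph{not} a child of $x$ (the resource graph forbids self-loops), so your passage to~\eqref{eqn:plan-summed} over-counts. The paper handles this by summing only over $\Pi=\{\pi: S_\pi=x,\ x\notin S^*_\pi\}$; since at most $C^*$ players can have $x$ in their optimal strategy, $|\Pi|\ge C_x-C^*$, and the left side becomes $(C_x-C^*)\,C_x^{\M}$ rather than $C_x^{\M+1}$. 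That is where the $-C^*$ comes from, and with this restriction the constants fall out exactly. If you instead keep all $C_x$ players and carry the stray $(C_x+1)^{\M}$ term through, you land on $\frac{C_x-2C^*}{2C^*}C_x^{\M}$, which is fine for the eventual $O(|R|^{1/(\M+1)})$ bound but does not match the lemma as stated.
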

\begin{proof}
Let $\Pi$ be the set of players such that $\forall \pi \in \Pi: S_{\pi}
= x$ and $x \not \in S^*_{\pi}$.  We must have $C_{x}-C^* \leq |\Pi|
\leq C_{x}$ since up to $C^*$ players could be using resource $x$
simultaneously in their optimal as well as equilibrium strategies.
Since $\pi$ is in equilibrium state, we must have $\sum_{ y \in S^*_{\pi}
} (C_y + 1)^{\M} \geq  C_x^{\M}$.  Let $Z_{\pi} = S^*_{\pi} \bigcap V_1$ and
$W_{\pi} =  S^*_{\pi} \bigcap V_2$.  Using the fact that $\forall z \in Z_{\pi}: C_z
> \psi $ and $C^* \geq 1$, we get that $((C_z+1)/C_z)^{\M} \leq (1 +
\frac{1}{\psi})^{\M} \leq \sqrt{e} < 2$.
Also $\forall w \in W: C_w +1 \leq \psi+1 < 2 \psi$ and hence $\sum_{z
\in Z_{\pi}} 2 (C_z)^{\M} + \sum_{w \in W_{\pi}}  2 \psi^{\M} \geq C_x^{\M}/2$.
Now consider the sum
\[
\sum_{\pi \in \Pi} \left( \sum_{z \in Z_{\pi}} C_z^{\M} + \sum_{w \in W_{\pi}}  \psi^{\M} \right)
\]
Since $|\Pi| \geq C_x-C^*$ and each resource $z$ and $w$ in the inside term above
can be in the sets $Z_{\pi}$ and $W_{\pi}$ for up to $C^*$ players from $\Pi$, we get
\beq
\sum_{y \in \mathrm{Ch}(x) \bigcap V_1} C_y^{\M} + \sum_{y \in \mathrm{Ch}(x) \bigcap V_2} \psi^{\M} 
\geq \frac{C_x - C^*}{2C^*} C_x^{\M}
\label{Childxlemmaeq}
\eeq
as desired.
\end{proof}

To get a bound on the price of anarchy we need to relate the number
of resources $|R|$ with the parameters $C$ and $C^*$. Note that the
second term on the LHS of Eq.~\ref{Childxlemmaeq} is bounded by $|R|
\psi^M$. Unfortunately, since $\H$ has cycles, we cannot apply the lemma
recursively to nodes from the first term on the LHS and their children
in $\H$ (to eventually replace these nodes with nodes from $V_2$).
We will therefore modify $\H$ to eliminate cycles and construct an {\em
expansion } Directed Acyclic Graph (DAG) $\T$ (without increasing the
size of $\H$), which will help us obtain our price of anarchy bound.
This is stated in the form of the lemma below (We omit the proof due to
space considerations).

\begin{lemma}
\label{lemma:HtoT}
Resource graph $\H$ can be transformed into expansion DAG $\T$ 
without affecting the equilibrium state $S$ and optimal
congestion $C^*$, where $|\T| \leq |\H|$. 
\end{lemma}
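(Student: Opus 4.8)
The plan is to eliminate cycles by turning congestion into a strict topological potential. Since every node of $V_1$ has congestion exceeding $\psi$, I will build $\T$ so that every edge between two $V_1$ nodes points from higher to strictly lower congestion; such a graph is automatically acyclic, and keeping the same node set makes $|\T|\le|\H|$ immediate while leaving the game, hence the equilibrium $S$ and the optimum $C^*$, entirely untouched (only the auxiliary edge set is manipulated). The fact that makes this possible is that for any $x\in V_1$ the expansion factor of Lemma~\ref{lemma:H-expansion} satisfies $\frac{C_x-C^*}{2C^*}>1$, because $C_x>\psi\ge 3C^*$. Thus a high-congestion child carries strictly more than its own weight, which is exactly what lets me push a backward edge downward through higher nodes without losing expansion mass.

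Concretely, I would process the $V_1$ nodes in decreasing order of congestion, breaking ties by a fixed linear order. When $x$ is processed I retain every $\H$-edge from $x$ to a child of congestion strictly below $C_x$ (and every edge into a terminal $V_2$ node), and I reroute each backward edge $x\to y$ with $C_y\ge C_x$: since $y$ was processed earlier, its children $\mathrm{Ch}_{\T}(y)$ already have congestion $<C_y$ and, by the induction hypothesis, total weight at least $\frac{C_y-C^*}{2C^*}C_y^{\M}\ge C_y^{\M}$, so I replace the single child $y$ by the set $\mathrm{Ch}_{\T}(y)$. Each such replacement strictly lowers the congestion of the target and, because the factor above exceeds one, never decreases the (multiset) child-weight; since congestions are positive integers, after finitely many replacements all of $x$'s children have congestion $<C_x$ and their weight is still at least $\frac{C_x-C^*}{2C^*}C_x^{\M}$. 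The result is a DAG on the nodes of $\H$ in which every node satisfies the expansion inequality of Lemma~\ref{lemma:H-expansion}, with leaves lying in $V_2$.

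The step I expect to be the main obstacle is double counting. The rerouting guarantees the per-node weight bound only as a \emph{multiset} sum, whereas the downstream price-of-anarchy argument must expand each internal node into \emph{distinct} resources and terminate at $V_2$-leaves of weight $\psi^{\M}$; a purely global summation that charges each child to its (at most $C^*$) parents provably fails here, since it cannot separate the dominant $C^{\,\M+1}$ term from the lower-order $\sum_{V_1}C_x^{\M}$ term. I would therefore organize $\T$ into layers by longest distance from the maximum-congestion root and carry out the expansion one layer at a time, so that each resource is charged in a single layer; the strict inequality $\frac{C_x-C^*}{2C^*}>1$ then forces the total $V_2$-weight reached to be at least the root weight $\tfrac{C-C^*}{2C^*}C^{\M}$, whence $|R|\ge|V_2|\ge \tfrac{C-C^*}{2C^*}\,C^{\M}/\psi^{\M}=\Omega\big((C/C^*)^{\M+1}\big)$. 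Making the per-layer accounting rigorous—ensuring the weight entering a layer is not over-counted across the multiple DAG-paths that reach a node—is the delicate point, and is precisely where the bound that each resource lies in at most $C^*$ optimal strategies (in-degree $\le C^*$) must be invoked.
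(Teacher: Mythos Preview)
The paper omits the proof of this lemma entirely (``We omit the proof due to space considerations''), so there is no authors' argument to compare against. Your proposal therefore cannot be matched to theirs, but it can be assessed on its own.

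Your plan is sound at the level of the key idea: since every $x\in V_1$ has $C_x>\psi\ge 3C^*$, the per-node expansion factor $\tfrac{C_x-C^*}{2C^*}$ strictly exceeds $1$, and this is precisely what lets you replace a ``backward'' child $y$ (with $C_y\ge C_x$) by $y$'s already-processed children without losing child-weight. Processing $V_1$ in decreasing congestion order and rerouting in this way does produce a DAG on the same vertex set, so $|\T|\le|\H|$ is immediate and neither $S$ nor $C^*$ is touched. One small technical point: your definition of a backward edge as $C_y\ge C_x$ together with ``$y$ was processed earlier'' only works if, within a tie class $C_y=C_x$, the edge $x\to y$ respects your tiebreak order; you should either make the tiebreak part of the definition of ``backward'' or argue that ties can be resolved consistently.

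You are right that the delicate part is the downstream accounting, not the acyclification itself. Your rerouting preserves the inequality of Lemma~\ref{lemma:H-expansion} only as a \emph{multiset} sum over children, while Lemma~\ref{descendantcount} needs a bound in terms of distinct $V_2$-resources. Your proposed layer-by-layer expansion, charging each resource once using the in-degree bound $\le C^*$, is the natural device, but as written it is a sketch rather than a proof: you have not said how a node reached along several DAG-paths is assigned to a single layer, nor why the per-layer losses (from collapsing multiset multiplicities back to sets, each at cost a factor $C^*$) do not compound across layers. Since the paper itself provides no argument, there is nothing stronger to point you to; what remains is to make that last bookkeeping step explicit.
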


Since $\T$ is a DAG we know that it has sink nodes (with outdegree
0). Every node in $V_1$ is an internal node (with non-zero indegree
and outdegree) since it has congestion $> C^*$ and hence the sink nodes
in $\T$ are nodes from $V_2$.  Consider the DAG starting at the root
node with congestion $C$. By applying lemma~\ref{lemma:H-expansion}
recursively to the root and its descendants in $V_1$, we  count the
number of nodes until we reach terminating sinks in $V_2$. Noting as
before that each resource in $T$ is counted at most $C^*$ times in the
lemma, we get the result

\begin{lemma}
For DAG $\T$ with root node $r$ and congestion $C_r =C$, it holds that
\[
\sum_{y \in \mathrm{Descendants}(r) \bigcap V_2} \psi^{\M} 
\geq \frac{C - C^*}{2C^*} C^{\M}
\]
\label{descendantcount}
\end{lemma}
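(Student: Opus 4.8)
The plan is to turn the single-step estimate of Lemma~\ref{lemma:H-expansion} into a global one by applying it recursively down the acyclic graph $\T$, starting at the root $r$ and pushing weight from the internal $V_1$ nodes onto the $V_2$ sinks. We may assume $C = C_r > \psi$, since otherwise $r \in V_2$, the graph has no $V_1$ nodes, and the ratio $C/C^*$ is already a constant. Applying Lemma~\ref{lemma:H-expansion} to $r$ gives
\[
\frac{C - C^*}{2C^*}\, C^{\M} \;\le\; \sum_{y \in \mathrm{Ch}(r) \cap V_1} C_y^{\M} \;+\; \sum_{y \in \mathrm{Ch}(r) \cap V_2} \psi^{\M},
\]
so it remains to eliminate the $C_y^{\M}$ terms coming from the $V_1$ children.

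The engine of the recursion is the observation that every $V_1$ node $y$ has $C_y > \psi \ge 3C^*$, whence $\frac{C_y - C^*}{2C^*} \ge 1$. Feeding this into Lemma~\ref{lemma:H-expansion} applied at $y$ yields the replacement inequality
\[
C_y^{\M} \;\le\; \frac{C_y - C^*}{2C^*}\, C_y^{\M} \;\le\; \sum_{z \in \mathrm{Ch}(y) \cap V_1} C_z^{\M} \;+\; \sum_{z \in \mathrm{Ch}(y) \cap V_2} \psi^{\M},
\]
which substitutes the weight $C_y^{\M}$ of an internal node by the weights of its children while preserving the inequality. I would substitute this for each surviving $V_1$ term, then iterate: at every stage each remaining $C_y^{\M}$ (with $y \in V_1$) is expanded into the weights of $\mathrm{Ch}(y)$, and $V_2$ terms are left untouched.

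Termination and the final tally are exactly where the DAG structure supplied by Lemma~\ref{lemma:HtoT} is indispensable. Since $\T$ is finite and acyclic the expansion must stop, and it can only stop at sinks; as every $V_1$ node has congestion $> C^*$ and hence nonzero out-degree, all sinks lie in $V_2$. Thus after finitely many substitutions the right-hand side is a nonnegative combination of $\psi^{\M}$ terms indexed by the $V_2$ descendants of $r$. I expect the main obstacle to be controlling the multiplicity with which a given $V_2$ descendant appears in this fully expanded sum, since in a DAG one sink may be reachable from the root along many directed paths. To bound it I would use the in-degree property recorded for $\H$ --- each resource is a child of at most $C^*$ nodes --- together with the structure preserved by the transformation of Lemma~\ref{lemma:HtoT}, exactly the same $C^*$-fold counting already used inside Lemma~\ref{lemma:H-expansion}. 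Charging each $V_2$ descendant its correct number of copies then reduces the expanded right-hand side to $\sum_{y \in \mathrm{Descendants}(r) \cap V_2} \psi^{\M}$ and delivers
\[
\sum_{y \in \mathrm{Descendants}(r) \cap V_2} \psi^{\M} \;\ge\; \frac{C - C^*}{2C^*}\, C^{\M},
\]
as required.
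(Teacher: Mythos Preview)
Your proposal follows exactly the paper's approach: apply Lemma~\ref{lemma:H-expansion} at the root, use the fact that $C_y>\psi\ge 3C^*$ forces $\frac{C_y-C^*}{2C^*}\ge 1$ so that each $V_1$ term can be recursively replaced by the corresponding child sum, terminate at the $V_2$ sinks of the DAG, and handle the over-counting via the ``each resource is a child of at most $C^*$ nodes'' bound together with the structure guaranteed by Lemma~\ref{lemma:HtoT}. The paper's own argument is precisely this recursive expansion plus the $C^*$-counting remark, stated more tersely; your write-up simply makes the $\frac{C_y-C^*}{2C^*}\ge 1$ step and the termination/multiplicity issues explicit.
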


\begin{theorem}[Price of Anarchy for Type-A players]
\label{PoA-A}
The upper bound on the price of anarchy is $PoA = O(|R|^{\frac{1}{\M+1}})$.
\end{theorem}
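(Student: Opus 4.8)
The plan is to read the bound off directly from Lemma~\ref{descendantcount} by a short algebraic computation, since all of the structural work (the expansion argument of Lemma~\ref{lemma:H-expansion} and the cycle elimination of Lemma~\ref{lemma:HtoT}) is already in place. First I would observe that the left-hand side of Lemma~\ref{descendantcount} is simply $|\mathrm{Descendants}(r) \cap V_2| \cdot \psi^{\M}$, and since every node of $\T$ corresponds to a distinct resource with $|\T| \leq |\H| \leq |R|$, this is at most $|R| \cdot \psi^{\M}$. Combining with the lemma yields the single master inequality
\[
|R| \cdot \psi^{\M} \geq \frac{C - C^*}{2C^*}\, C^{\M},
\]
from which everything will follow.

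Next I would split on the size of $C$ relative to the threshold $\psi = \max(2\M, 3C^*)$. If $C \leq \psi$, then either $C \leq 3C^*$ (so $PoA = C/C^* \leq 3$) or $C \leq 2\M$ (so $PoA \leq C \leq 2\M$, using $C^* \geq 1$); in both subcases $PoA = O(1) = O(|R|^{1/(\M+1)})$ for constant $\M$. The interesting case is $C > \psi \geq 3C^*$, where $C^* < C/3$ and hence $C - C^* > \tfrac{2}{3}C$. Substituting this into the master inequality gives $|R|\,\psi^{\M} \geq \tfrac{1}{3C^*}\,C^{\M+1}$, i.e.\ $C^{\M+1} \leq 3\,C^*\,\psi^{\M}\,|R|$.

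Finally I would insert the value of $\psi$. When $3C^* \geq 2\M$ we have $\psi = 3C^*$, so $C^{\M+1} \leq 3C^*(3C^*)^{\M}|R| = 3^{\M+1}(C^*)^{\M+1}|R|$, which rearranges to $(C/C^*)^{\M+1} \leq 3^{\M+1}|R|$ and hence $PoA = C/C^* \leq 3\,|R|^{1/(\M+1)}$. When instead $2\M > 3C^*$ we have $\psi = 2\M$ and $C^* < 2\M/3$ is bounded by a constant, so $C^{\M+1} \leq 3C^*(2\M)^{\M}|R| \leq (2\M)^{\M+1}|R|$, giving $C \leq 2\M\,|R|^{1/(\M+1)}$ and again $PoA \leq C = O(|R|^{1/(\M+1)})$. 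In all cases $PoA = O(|R|^{1/(\M+1)})$, as claimed.

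I do not expect a genuine obstacle in this theorem: the difficulty lives entirely upstream, in establishing the recursive expansion bound of Lemma~\ref{lemma:H-expansion} and propagating it to the sink nodes in Lemma~\ref{descendantcount}. The only thing requiring care here is the case analysis on the two branches of $\psi$, and keeping the constant factors transparent under the standing assumption that $\M$ is a fixed constant (so that $\psi = 2\M$ and $(2\M)^{\M}$ are $O(1)$).
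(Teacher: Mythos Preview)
Your proposal is correct and follows essentially the same route as the paper: bound the left-hand side of Lemma~\ref{descendantcount} by $|R|\cdot\psi^{\M}$ (the paper uses $|R|-1$) and then solve the resulting inequality for $C/C^*$. Your explicit case split on $C\le\psi$ versus $C>\psi$ and on the two branches of $\psi=\max(2\M,3C^*)$ is in fact more careful than the paper's terse computation, which implicitly assumes $\psi=3C^*$ and absorbs the other cases into the $O(\cdot)$ via the standing hypothesis that $\M$ is constant.
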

\begin{proof}
The number of descendants of $r$ in $T$ is at most $|R|-1$.
Using the fact that $\psi \geq 3C^*$ and substituting in lemma~\ref{descendantcount},
we get that 
\[
|R|-1 \geq (PoA-1)\cdot \frac{PoA^{\M}}{2 \cdot 3^{\M}} > \frac{PoA^{M+1}}{4 \cdot 3^M}
\]
for any $PoA > 2$, and hence for the given constant $\M$, we get the desired result
$PoA = O(|R|^{\frac{1}{\M+1}})$.
\end{proof}

\subsection{Price of Anarchy for Arbitrary Games}
By Theorem~\ref{BtoAlemma}, we only need to consider games in equilibrium
with type-A players occupying resources with congestion $> \psi$.
By combining Theorem \ref{BtoAlemma}, Theorem \ref{PoA-A}, and Corollary
\ref{dominatingcorollary} we obtain the main result for price of anarchy:

\begin{theorem}[$PoA$ for Arbitrary Polynomial Cost Games]
\label{PoA-main}
The upper bound on the price of anarchy for polynomial cost games is
$O(|R|^{\frac{1}{\M+1}})$.
\end{theorem}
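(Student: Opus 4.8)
The plan is to assemble the main bound by chaining together the three structural results already established, so that no fresh analysis of congestion or equilibria is required. First I would fix an arbitrary polynomial cost game $G$ and restrict attention to the Nash equilibrium state $S$ that attains the largest social cost, so that $PoA(G) = C/C^*$, where $C$ and $C^*$ are the bottleneck congestions of $S$ and of an optimal state $S^*$. Applying Theorem~\ref{BtoAlemma} to this state produces a game $\tG$ with equilibrium $\tS$ in which every resource of congestion exceeding $\psi = \max(2\M,3C^*)$ is used exclusively by type-$A$ players; crucially, the transformation keeps the equilibrium congestion fixed, $\tC = C$.

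Next I would invoke Theorem~\ref{dominatedtheorem} to certify that $\tG$ is dominated by $G$ in the precise sense of the domination definition: the resource count does not grow, $|\tilde{R}| \leq |R|$, the equilibrium congestions agree, $\tC = C$, and the optimal congestions satisfy $C^* \leq \tC^* \leq 7C^*$. Hence the domination constant may be taken to be $\beta = 7$. The two properties extracted here, namely the type-$A$ structure on high-congestion resources and the domination relation, are exactly the hypotheses needed to feed $\tG$ into the remaining two ingredients.

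Because $\tG$ has type-$A$ players on all resources of congestion above the threshold, Theorem~\ref{PoA-A} applies verbatim to $\tG$ and yields $PoA(\tG) = O(|\tilde{R}|^{1/(\M+1)})$; since $|\tilde{R}| \leq |R|$, this is $O(|R|^{1/(\M+1)})$. Finally, Corollary~\ref{dominatingcorollary} converts the bound on the dominated game back into a bound on the original game:
\[
PoA(G) \;\leq\; \beta \cdot PoA(\tG) \;=\; 7 \cdot O\!\left(|R|^{1/(\M+1)}\right) \;=\; O\!\left(|R|^{1/(\M+1)}\right),
\]
where the constant factor $7$ is absorbed into the asymptotic notation. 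Since $G$ was arbitrary, this proves the theorem.

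The argument carries no heavy computation, so I expect the main obstacle to be bookkeeping rather than a calculation: making sure that the threshold $\psi$, the distinguished highest-cost equilibrium, and the fixed value of $C^*$ are used consistently across all three ingredients, and in particular that the type-$A$ game fed into Theorem~\ref{PoA-A} is genuinely the same object shown to be dominated in Theorem~\ref{dominatedtheorem}. Once these objects are aligned, the chain of inequalities closes, and the constant $\beta$ is harmless precisely because it is independent of $|R|$ and therefore does not affect the exponent $1/(\M+1)$.
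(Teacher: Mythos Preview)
Your proposal is correct and follows essentially the same route as the paper: the paper's proof simply states that the result follows by combining Theorem~\ref{BtoAlemma}, Theorem~\ref{PoA-A}, and Corollary~\ref{dominatingcorollary}, and your argument spells out this chain explicitly (including the implicit use of Theorem~\ref{dominatedtheorem} to supply the domination constant $\beta = 7$). Your bookkeeping remark about aligning the threshold $\psi$ across the original and transformed games is a fair caution, but since $\tC^* \geq C^*$ the threshold in $\tG$ only increases, so the type-$A$ hypothesis of Theorem~\ref{PoA-A} is still met.
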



\section{Lower Bound}
\label{section:lower}

We show that the upper bound of $O(|R|^{1/(M+1)}$ in the price of anarchy is tight
by demonstrating a congestion game with a 
lower bound on the price of anarchy of $\Omega(|R|^{1/(M+1)})$.
We construct a game instance represented as a graph in the figure below,
such that each edge in the graph corresponds to a resource,
and each player $\pi_i$ has two strategies available: either the path from $u$ to $v$
through the direct edge $e = (u,v)$, or an alternative path $p_i = (u,x_i, \ldots , y_i, v)$
(note that different player paths are edge-disjoint).
Each path has length $|p_i| = |R|^{\M/(\M+1)}$ edges
and the number of players is $n = |R|^{1/(\M +1)}$.
(For simplicity assume that the values of $|p_i|$ and $n$ are integers,
since if they are not we can always round to the nearest ceiling.)
The edge $e$ is actually part of one the paths; for ease of presentation,
in the figure below the edge is depicted 
as separate from the paths. 
\begin{center}
\resizebox{3.5in}{!}{\begin{picture}(0,0)%
\includegraphics{sumc.pstex}%
\end{picture}%
\setlength{\unitlength}{3947sp}%
\begingroup\makeatletter\ifx\SetFigFont\undefined%
\gdef\SetFigFont#1#2#3#4#5{%
  \reset@font\fontsize{#1}{#2pt}%
  \fontfamily{#3}\fontseries{#4}\fontshape{#5}%
  \selectfont}%
\fi\endgroup%
\begin{picture}(10092,2685)(270,-2477)
\put(905,-502){\makebox(0,0)[rb]{\smash{{\SetFigFont{12}{14.4}{\rmdefault}{\mddefault}{\updefault}$x_1$}}}}
\put(4363,-502){\makebox(0,0)[lb]{\smash{{\SetFigFont{12}{14.4}{\rmdefault}{\mddefault}{\updefault}$y_1$}}}}
\put(2331, 27){\makebox(0,0)[rb]{\smash{{\SetFigFont{12}{14.4}{\rmdefault}{\mddefault}{\updefault}$u$}}}}
\put(2937, 36){\makebox(0,0)[lb]{\smash{{\SetFigFont{12}{14.4}{\rmdefault}{\mddefault}{\updefault}$v$}}}}
\put(923,-2005){\makebox(0,0)[rb]{\smash{{\SetFigFont{12}{14.4}{\rmdefault}{\mddefault}{\updefault}$x_{n}$}}}}
\put(4338,-1996){\makebox(0,0)[lb]{\smash{{\SetFigFont{12}{14.4}{\rmdefault}{\mddefault}{\updefault}$y_{n}$}}}}
\put(923,-1057){\makebox(0,0)[rb]{\smash{{\SetFigFont{12}{14.4}{\rmdefault}{\mddefault}{\updefault}$x_2$}}}}
\put(4355,-1048){\makebox(0,0)[lb]{\smash{{\SetFigFont{12}{14.4}{\rmdefault}{\mddefault}{\updefault}$y_2$}}}}
\put(6251,-531){\makebox(0,0)[rb]{\smash{{\SetFigFont{12}{14.4}{\rmdefault}{\mddefault}{\updefault}$x_1$}}}}
\put(9709,-531){\makebox(0,0)[lb]{\smash{{\SetFigFont{12}{14.4}{\rmdefault}{\mddefault}{\updefault}$y_1$}}}}
\put(7677, -2){\makebox(0,0)[rb]{\smash{{\SetFigFont{12}{14.4}{\rmdefault}{\mddefault}{\updefault}$u$}}}}
\put(8283,  7){\makebox(0,0)[lb]{\smash{{\SetFigFont{12}{14.4}{\rmdefault}{\mddefault}{\updefault}$v$}}}}
\put(6269,-2034){\makebox(0,0)[rb]{\smash{{\SetFigFont{12}{14.4}{\rmdefault}{\mddefault}{\updefault}$x_{n}$}}}}
\put(9684,-2025){\makebox(0,0)[lb]{\smash{{\SetFigFont{12}{14.4}{\rmdefault}{\mddefault}{\updefault}$y_{n}$}}}}
\put(6269,-1086){\makebox(0,0)[rb]{\smash{{\SetFigFont{12}{14.4}{\rmdefault}{\mddefault}{\updefault}$x_2$}}}}
\put(9701,-1077){\makebox(0,0)[lb]{\smash{{\SetFigFont{12}{14.4}{\rmdefault}{\mddefault}{\updefault}$y_2$}}}}
\put(7985,-373){\makebox(0,0)[b]{\smash{{\SetFigFont{12}{14.4}{\rmdefault}{\mddefault}{\updefault}$p_1$}}}}
\put(7994,-1756){\makebox(0,0)[b]{\smash{{\SetFigFont{12}{14.4}{\rmdefault}{\mddefault}{\updefault}$p_n$}}}}
\put(7985,-834){\makebox(0,0)[b]{\smash{{\SetFigFont{12}{14.4}{\rmdefault}{\mddefault}{\updefault}$p_2$}}}}
\put(2637,-2402){\makebox(0,0)[b]{\smash{{\SetFigFont{12}{14.4}{\rmdefault}{\mddefault}{\updefault}Nash Equilibrium}}}}
\put(7964,-2413){\makebox(0,0)[b]{\smash{{\SetFigFont{12}{14.4}{\rmdefault}{\mddefault}{\updefault}Routing with optimal social cost 1}}}}
\end{picture}%
}
\end{center}
%
Let $S$ be the state depicted on the left part of the figure,
where each player chooses the first strategy,
and let $S^*$ be the state on the right part of the figure,
where each player chooses the alternative path.
We have that $C(S^*) = 1$, which is the smallest congestion possible.
Thus, $S^*$ represents a socially optimal solution.
For state $S$ we have that $C(S) = n$, since all players use edge $(u,v)$.
Note that $S$ is a Nash Equilibrium,
since each player $\pi_i$
has cost $pc_{\pi_i}(S) = n^{\M} = |R|^{\M/ (\M + 1)}$,
and the cost of switching to path $p_i$ would be
$1^{\M} \cdot |p_i|= |R|^{\M/ (\M + 1)}$,
which is the same at the cost of using edge $(u,v)$.
Consequently, a lower bound on the price of anarchy is
$C(S) / C(S^*) = n/1 = |R|^{1/(\M +1)}$.
Therefore, $PoA = \Omega(|R|^{1/(\M +1)})$, as needed.

\section{Conclusions}
We have considered bottleneck congestion games with polynomial
cost functions and shown that the Price of Anarchy is bounded by
$O(|R|^{\frac{1}{\M+1}})$. This Price of Anarchy result is optimal,
as demonstrated by a game with this exact $PoA$. We also demonstrate
two novel techniques, $B$ to $A$ player conversion and expansion which
help us obtain this result. These techniques which enable us to simplify
games for analysis are sufficiently general. In future work, we plan
to use these techniques to analyze the $PoA$ of games with 
arbitrary player cost functions.

{
\bibliographystyle{plain}
\bibliography{main}
}




\end{document}